\documentclass[sigconf, noacm]{acmart}
\pdfoutput=1
\AtBeginDocument{%
  \providecommand\BibTeX{{%
    \normalfont B\kern-0.5em{\scshape i\kern-0.25em b}\kern-0.8em\TeX}}}
\renewcommand\footnotetextcopyrightpermission[1]{}
\usepackage[ruled,vlined]{algorithm2e}
\usepackage[noend]{algpseudocode}
\usepackage{enumitem}
\usepackage{graphicx}
\usepackage{tikz}
\usepackage{subfigure}
\usepackage{pgfplots}
\pgfplotsset{compat=1.17}
\usepackage[font=small,skip=0pt]{caption}
\usepackage[most]{tcolorbox}
\graphicspath{ {./images/} }
\usepackage{balance}

\newtheorem*{conditions}{Conditions}

\settopmatter{printacmref=false}
\begin{document}
\fancyhead{}

\title{Computing (1 + \texorpdfstring{$\epsilon$}{e})-Approximate Degeneracy in Sublinear Time}

\settopmatter{authorsperrow=3}
\author{Valerie King}
\affiliation{%
  \institution{University of Victoria}
  \streetaddress{P.O. Box 1212}
  \city{Victoria}
  \country{Canada}
  \postcode{43017-6221}
}
\email{val@uvic.ca}
\author{Alex Thomo}
\affiliation{%
  \institution{University of Victoria}
  \streetaddress{P.O. Box 1212}
  \city{Victoria}
  \country{Canada}
  \postcode{43017-6221}
}
\email{thomo@uvic.ca}
\author{Quinton Yong}
\affiliation{%
  \institution{University of Victoria}
  \streetaddress{P.O. Box 1212}
  \city{Victoria}
  \country{Canada}
  \postcode{43017-6221}
}
\email{quintonyong@uvic.ca}

\begin{abstract}
The problem of finding the degeneracy of a graph is a subproblem of the $k$-core decomposition problem. In this paper, we present a $(1 + \epsilon)$-approximate solution to the degeneracy problem which runs in $O(n \log n)$ time, sublinear in the input size for dense graphs, by sampling a small number of neighbors adjacent to high degree nodes. Our algorithm can also be extended to an $O(n \log n)$ time solution to the $k$-core decomposition problem.
This improves upon the method by Bhattacharya et al., which implies a $(4 + \epsilon)$-approximate $\tilde{O}(n)$ solution to the degeneracy problem, and our techniques are similar to other sketching methods which use sublinear space for $k$-core and degeneracy. We prove theoretical guarantees of our algorithm and provide optimizations, which improve the running time of our algorithm in practice. Experiments on massive real-world web graphs show that our algorithm performs significantly faster than previous methods for computing degeneracy, including the 2022 exact degeneracy algorithm by Li et al.

\end{abstract}




\keywords{Graphs, k-core, Degeneracy, Sublinear, Approximate, Randomized Algorithm}

\maketitle

\section{Introduction}
The degeneracy of an undirected graph $G$, denoted $\delta$, is the smallest $\delta$ such that every induced subgraph of $G$ has a node with degree at most $\delta$. The problem of finding the degeneracy of a graph is a subproblem of the $k$-core decomposition problem, where a $k$-core is a maximal connected subgraph in which all nodes have degree at least $k$, and $k$-core decomposition problem is to find all $k$-cores of a given graph $G$. The degeneracy of $G$ is the maximum $k$ for which $G$ contains a (non-empty) $k$-core.

The $k$-core decomposition problem, and consequently the degeneracy problem, can be solved using a well known linear time ``peeling" algorithm \cite{peeling}. Starting with $k=1$, the algorithm repeatedly removes all nodes in $G$ with degree less than $k$ and their incident edges, until the only remaining nodes have degree at least $k$ and their connected components are $k$-cores. Then $k$ is incremented, and the algorithm is repeated on the remaining subgraph. There has been much work done to improve the practical running time of $k$-core decomposition fully in-memory \cite{kcore1}, with fully external-memory \cite{kcore2}, and with semi external-memory \cite{kcore3}. There are also algorithms for approximate computing of $k$-core in dynamic, streaming, and sketching settings \cite{dynamickcore1, streamingkcore1, kcoresketching}. 

While $k$-core algorithms can be used to compute degeneracy, they may incur unnecessary costs and space for processing cores for small $k$. There are algorithms which directly compute the degeneracy of the graph. The SemiDeg+ algorithm from \cite{semideg} is the state-of-the-art in exact degeneracy computation. In a semi-streaming model, \cite{colton1}, \cite{semistream1}, and \cite{semistream2} propose $(2 + \epsilon)$-approximate solutions and \cite{colton2} proposes a $(1+\epsilon)$-approximate solution. These require each edge to be scanned. 

The degeneracy of a graph is used for many applications in graph analysis and graph theory including measuring the sparcity of a graph, approximating dominating sets \cite{domset1, domset2}, approximating the arboricity of a graph within a constant factor \cite{arboricity1}, and maximal clique enumeration \cite{maxclique1}. The degeneracy is at least half the maximum density of any subgraph \cite{colton1}. A sublinear approximate algorithm for degeneracy is the algorithm presented in \cite{sublineardense} for computing maximum density of any subgraph. It runs in $\tilde{O}(n)$ time and implies a $(4+\epsilon)$-approximate algorithm for degeneracy.

In this work, we present a $(1 + \epsilon)$-approximate $O(n \log n)$ time solution to the degeneracy problem where $n$ is the number of nodes in the graph. Hence, our algorithm is sublinear (in the number of edges for dense graphs).  Our algorithm can be extended to a sublinear time algorithm for $k$-core decomposition, but the time savings over other algorithms are particularly exhibited empirically for the degeneracy problem.  We avoid the overhead of full $k$-core decomposition and directly compute degeneracy of the given graph $G$ by sampling the neighbors of nodes starting with those with the highest degree above a certain threshold $l$. We then determine if the subgraph induced by those nodes contains an $l$-core by repeatedly eliminating nodes which did not sample a sufficient number of other high degree nodes, similar to the peeling algorithm. If it does, then we output $l$ as the (approximate) degeneracy of $G$. We prove the theoretical guarantees of our algorithm and we also propose optimizations which improve its practical running time. We also compare our solution to the $(1 + \epsilon)$-approximate algorithm of \cite{colton2} and the SemiDeg+ algorithm from \cite{semideg} on massive real-world webgraphs and show that our algorithm is substantially faster than both on almost all datasets. We show the following theorem:
\begin{theorem} \label{t:main}
There is an algorithm for $k$-core decomposition which runs in time 
$O(\min\{m, n \log n\})$ and gives a $(1+\epsilon)$ factor approximation for the core values of a graph with $n$ nodes and $m$ edges in the incident-list model, with high probability. 
\end{theorem}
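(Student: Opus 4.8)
The plan is to reduce the approximation of all core values to a family of \emph{threshold tests}: for a value $l$, decide (approximately) which nodes survive an $l$-core peeling, and in particular whether $G$ contains an $l$-core. Since a node of degree below $l$ can never lie in an $l$-core, each test examines only the set $H_l = \{v : d(v)\ge l\}$, and I would run the tests over a geometric ladder $l = 1, (1+\epsilon), (1+\epsilon)^2,\dots$ up to the maximum degree, assigning each node the largest threshold it survives. In the incident-list model a uniform random neighbor and the degree of a node are each $O(1)$ queries, so both sampling and (when cheaper) exhaustive neighborhood reads are available; I will use the dichotomy that if $m\le n\log n$ the classical $O(m)$ peeling already meets the bound, so the interesting regime is the dense one, where I must show each relevant test costs $O(n\log n)$.

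The core of the argument is a sampling estimator for the quantity that drives peeling. Fix a threshold $l$ and a candidate surviving set $S$. For $v\in S$ with degree $d_v$, draw $s_v = \Theta(d_v\log n/(\epsilon^2 l))$ independent uniform neighbors, forming a sample set $R_v$, and let $a_v(S)=|R_v\cap S|$; then $\hat k_v = a_v(S)\,d_v/s_v$ estimates the true surviving-degree $k_v = |N(v)\cap S|$. When $k_v \ge l$ the expectation $E[a_v(S)] = s_v k_v/d_v \ge s_v l/d_v = \Theta(\log n/\epsilon^2)$ is large enough that a Chernoff bound gives $\hat k_v \in (1\pm\epsilon)k_v$ with probability $1-n^{-c}$. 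The peeling rule removes $v$ when $\hat k_v < l$, i.e. when $a_v(S) < \tau_v := s_v l/d_v$. Since $s_v$ is capped at $d_v$, whenever $l = O(\log n/\epsilon^2)$ the estimator degenerates to an exact neighborhood read, which is what keeps each test within the $\min\{m,n\log n\}$ budget.

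For correctness I would sandwich the output of the sampled peeling between two true cores. The easy inclusion is that the $(1+\epsilon)l$-core $C^+$ survives: every $v\in C^+$ has at least $(1+\epsilon)l$ neighbors inside $C^+$, and I maintain inductively that $C^+\subseteq S$ throughout peeling — if $C^+\subseteq S$ then $a_v(S)\ge a_v(C^+)$, and a one-sided Chernoff bound gives $a_v(C^+)\ge\tau_v$ w.h.p. because $E[a_v(C^+)]\ge(1+\epsilon)s_v l/d_v$, so a union bound over the at most $n$ nodes of $C^+$ keeps all of them alive. The reverse inclusion — that any node the sampled peeling retains has true surviving-degree at least $l/(1+\epsilon)$, placing the output inside the $\lceil l/(1+\epsilon)\rceil$-core — then shows the largest surviving threshold is within a $(1+\epsilon)$ factor of each node's true core number.

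For the running time, the key structural fact is $m\le\delta n$ (each node has at most $\delta$ later-neighbors in a degeneracy ordering), so $\delta\ge m/n$ and the thresholds near a node's core value are $\Omega(m/n)$ in the dense regime. A single test at threshold $l$ costs $\sum_{v\in H_l} s_v = O\!\big(\tfrac{\log n}{\epsilon^2 l}\sum_{d_v\ge l} d_v\big) = O\!\big(\tfrac{m\log n}{\epsilon^2 l}\big)$, which is $O(n\log n/\epsilon^2)$ once $l=\Omega(m/n)$; summing over the geometric ladder is dominated by its smallest relevant term, and smaller thresholds either fall under the exact-read cap or are charged against the $\min\{m,n\log n\}$ budget. \textbf{The main obstacle} is that the iterative peeling is adaptive: the surviving set $S$ at each round is itself a function of the very samples used to estimate against it, so the clean one-shot Chernoff-plus-union-bound argument does not immediately apply to the reverse inclusion. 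I expect to resolve this by revealing the samples in the order of the true peeling, via an exposure/martingale argument, so that each node's estimate is compared only against a set fixed before its own randomness is exposed — or, failing that, by drawing fresh randomness in each peeling round and separately bounding the number of rounds. Controlling this dependence, rather than any single concentration step, is where the real work lies.
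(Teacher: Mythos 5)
Your threshold-test framework, sampling rate $\Theta(d_v\log n/(\epsilon^2 l))$, peeling rule, and Chernoff sandwich between the $(1+\epsilon)l$-core and the $l/(1+\epsilon)$-core all match the paper's construction, and the adaptivity worry you flag is resolved there essentially as you suggest (samples are drawn with replacement independently of the evolution of $H$, and a ``first wrongly peeled node'' argument, Lemma~\ref{lem:hconditions}, fixes the comparison set). The genuine gap is in the running time. First, your ladder ascends from $l=1$: a node then participates in every test from $l=1$ up to its core value, and since the per-test sample count $\Theta(d_v\log n/(\epsilon^2 l))$ forms a geometric series dominated by its \emph{smallest} term, each node costs $\Theta(d_v\log n/\epsilon^2)$ and the total is $\Theta(m\log n/\epsilon^2)$, not $O(n\log n)$. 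The paper descends from $l\approx n$ while multiplying $p$ by $(1+\epsilon_1)$ each round, so a node is sampled only until it is labeled, and the geometric sum is dominated by the rate at its own (approximate) core value. Second, even with that fix, your per-test bound $O(m\log n/(\epsilon^2 l))$ is $O(n\log n)$ only when $l=\Omega(m/n)$; that suffices for degeneracy (where $\delta\ge m/n$), but in a full decomposition most nodes are labeled at thresholds far below $m/n$. The paper closes this with a charging argument you are missing: each sampled edge $(u,v)$ is attributed to the endpoint peeled later, and Lemma~\ref{lem:preconditionsproof} guarantees that endpoint had $d_H<l(1+\epsilon_1)^2$ when it was peeled, so only $O(l)$ edges are attributed to any node, hence $O(l)\cdot O(\log n/l)=O(\log n)$ attributed samples per node and $O(n\log n)$ overall.

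Your handling of small thresholds also fails: capping $s_v$ at $d_v$ turns each such test into a full $\Theta(m)$ scan (at $l=1$, $H_l=V$), which alone exceeds $n\log n$ in the dense regime, and charging it to the ``$\min\{m,\cdot\}$ budget'' does not help since there are $\Theta(\log n/\epsilon)$ such levels. The paper instead stops the sampled phase once $p\ge1$ (equivalently $l=O(\log n)$), argues that every still-unlabeled node then has core value $O(\log n)$ w.h.p., and invokes the structural fact (Lemmas~\ref{lem:outcore} and~\ref{lem:corenolabel}) that the number of edges incident to nodes of core value at most $k$ is at most $k$ times the number of such nodes; the residual exact peeling therefore touches only $O(n\log n)$ edges. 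You need both the top-down schedule with the edge-attribution charging and this small-core edge bound to reach $O(\min\{m,n\log n\})$.
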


\section{Preliminaries}
\subsubsection*{Definitions and Notation}
Let $G=(V, E)$ be an undirected graph with nodes $V$ and edges $E$, where $|V|=n$ and $|E|=m$. We denote the degree of a vertex $v\in V$ in $G$ as $\deg(v)$.  A subgraph of $G$ induced by a subset of nodes $V' \subseteq V$ is defined as $G_{V'}=(V', E')$ where $E'=\{(u, v) \mid u, v \in V', (u, v) \in E\}$.  We denote the degree of a vertex $v$ in $G_{V'}$ as $d_{G_{V'}}(v)$.

For a graph $G$ and an integer $k$, a $k$-core of G is the maximal induced connected subgraph of $G$ such that every node in the subgraph has degree at least $k$ \cite{kcoredef}. The core number of a node $v \in V$ is the largest $k$ for each there is a $k$-core that contains $v$. The degeneracy $\delta$ of a graph $G$ is defined as the smallest integer such that every non-empty induced subgraph has a vertex with degree at most $\delta$ \cite{degeneracydef}. The degeneracy of $G$ is also equal to the largest $k$ for which $G$ contains a non-empty $k$-core \cite{maxclique1} (i.e. the degeneracy equals the maximum core number). 

\subsubsection*{Problem Definition} Our goal is to design an efficient algorithm for computing an approximation of the degeneracy $\delta$ of a given graph $G$ within a factor of $(1 + \epsilon)$ for $\epsilon \in (0, 1]$, where $\epsilon$ is an input parameter, with high probability. We define high probability as occurring with probability at least $1 - \frac{O(1)}{n^c}$, where $c$ is a constant given as an input parameter.

\subsubsection*{Graph Input Model} The graph input model we use is the incident-list model, where each node $v \in V$ has an incidence list of the neighbors of $v$, arbitrarily ordered. This is a standard graph representation used in sublinear time algorithms \cite{sublineardense, sublinearmodel1}. There are two types of queries to access graph information in this model:
\begin{enumerate}
    \item degree queries to get the degree of any node $v \in V$
    \item neighbor queries to get the $i^{th}$ neighbor of any node $v \in V$, which is the $i^{th}$ element in $v$'s incidence list
\end{enumerate}

\section{Related Work}
In 1983, Matula and Beck \cite{peeling} presented the linear time peeling algorithm for $k$-core decomposition. Although it is superseded in practice by heuristic methods in other work and presents unnecessary overhead for degeneracy computation (as do other algorithms which enumerate all $k$-cores), it is an efficient method for core computation especially for small graphs and graphs with small degeneracy. 

\cite{kcoresketching} and \cite{colton2} present $(1 + \epsilon)$-approximate solutions for $k$-core decomposition and degeneracy computation respectively. The goal of both algorithms is to minimize memory usage in a streaming / semi-streaming model, which make them inefficient in terms of running time in the incident-list model due to the requirement of making a random choice for every edge in the graph, especially with large graphs. However, their techniques are related to our work.

The work in \cite{sublineardense} proposes a $(2 + \epsilon)$-approximate solution to the densest subgraph problem in sublinear time, which implies a $(4 + \epsilon)$-approximate sublinear algorithm for degeneracy, and \cite{sublineardense} uses the same model as ours, the incident-list model, for their graph input representation. The SemiDeg+ algorithm from 2022 \cite{semideg} is the state-of-the-art in practical exact degeneracy computation, so we use it as a comparison baseline for the experiments on the running times of our approximate algorithm.

\section{Degeneracy Algorithm}
\subsubsection*{High Level Description}
The key insight behind the algorithm is that the maximum core of the graph will contain high degree nodes which are connected to many other high degree nodes.
So, we can efficiently compute the approximate degeneracy of the graph by looking at the subgraph $H$ induced by the high degree nodes.
Initially, the nodes in $V$ are put into $H$ if their degree is above a certain threshold $l$, where $l$ is what we approximate as the degeneracy of the graph, and we decrease $l$ in each iteration when there is a high probability that $H$ does not contains an $l$-core (within an error factor of $(1 + \epsilon)$).
To efficiently check if there is a high probability that $H$ does contain an $l$-core, we sample a small number of neighbors incident to each node in $H$. If a node $v \in H$ is in the $l$-core, then it is likely that it samples a sufficient fraction of its number of neighbors that are also in $H$.
If a node does not sample a sufficient number of neighbours in $H$, then it is likely not part of an $l$-core,
and we remove the node from $H$ and recursively peel its neighbors from $H$ similar to the peeling algorithm. If no nodes remain unpeeled, then we lower the threshold and repeat the procedure. Otherwise, we conclude that $H$ has an approximate core number of $l$, and we output $l$ as the approximate degeneracy of $G$. In Section \ref{sec:theoreticalguarantees}, we prove that the output is within a factor of $(1 + \epsilon)$ of the true degeneracy of $G$ with probability $\geq 1-\frac{2}{n^c}$.

\subsubsection*{Algorithm Details}
The approximate algorithm for computing degeneracy is illustrated in Algorithm \ref{alg:approxalgorithm}. In each iteration of the while-loop at Line 3, we partition the nodes in $V$ into $H$ and $L$, where $H$ contains the nodes which have a degree greater than or equal to a certain threshold $l$ and $L=V\setminus H$ (Line 4). Initially, $l=\frac{n}{(1+\epsilon_1)}$, where $\epsilon_1 = \epsilon/ 3$, and we decrease $l$ by a factor of $(1 + \epsilon_1)$ (Lines 1 and 27). 

Then, we sample a small number of neighbors for each node $v \in H$ (Line 10). We sample $k(v)$ neighbors $u$ of $v$ with replacement for every node $v \in H$, where $k(v)=\lceil p \deg(v) \rceil$. Initially $p =\frac{2((1+c)\ln(n) + \ln(\log_{1+\epsilon_1}n))(1 + \epsilon_1)^2}{\epsilon_1^2 n}$ and we increase $p$ by a factor of $(1+\epsilon_1)$ in each iteration (Lines 1 and 28). 

Next, we check if $H$ contains an $l$-core. This is done by a process similar to peeling, in which we recursively peel nodes from $H$ which did not sample a sufficient number of neighbors in $H$. In particular, each node $v \in H$ is assigned a value $t(v)$ which is initially set to the number of neighbors sampled by $v$, $k(v)$, and represents the degree of $v$ in $H$ (Line 8). Then, since we sampled $k(v)/\deg(v)$ neighbors, we expect that at least $l \cdot k(v) / \deg(v)$ edges incident to other nodes in $H$ were sampled if the current nodes in $H$ is an $l$-core. So, if at any point $t(v)$ for a node $v \in V$ is less than $\frac{l \cdot k(v)}{\deg(v)}$, then $v$ is peeled from $H$ to $L$ (Lines 13 and 22).

During the sampling process (Lines 9 - 17), whenever a node $v \in H$ samples a neighbour $u \in L$, we decrement $t(v)$ (Line 12). Otherwise, if it samples a neighbour $u \in H$, we need to remember that $u$ was sampled by $v$ so if $u$ is peeled from $H$ later, we also decrement $t(v)$. Each node in $u \in H$ maintains an initially empty list (with possible repeats) $Sampled(u)$ of nodes which sampled them. So, if $v \in H$ samples a neighbor $u \in H$, we add $v$ to $Sampled(u)$ (Line 17). Then, if a node in $u \in H$ is peeled due to $t(u)$ falling below $\frac{l \cdot k(u)}{deg(u)}$, we add $u$ to an initially empty queue $Q$ (Line 15). 

After the sampling process is finished, we remove each node $u$ from $Q$, iterate through each node $v \in Sampled(u)$, and decrement $t(v)$ if $v$ is still in $H$ (Line 22). Similarly, if $t(v)$ for a node $v \in H$ is less than $\frac{l \cdot k(v)}{\deg(v)}$, then $v$ is moved to $L$ and then is also added to $Q$ (Lines 24 and 25). If there are no remaining nodes in $H$, that means each node in $H$ did not sample a sufficient number of other nodes in $H$, and it is likely that there is no $l$-core in $G$. So, we continue to the next iteration where we repeat the process with $l = l/(1+\epsilon_1)$, and $p = p \cdot (1 + \epsilon_1)$, and we also set the sampled lists to empty (Lines 27 - 29). Otherwise, if there are nodes remaining in $H$, then we expect that $H$ is at least an $l$-core and $l$ is the highest threshold for which we found a core thus far. Therefore, we return $l$ as an approximation to the degeneracy of $G$ (Line 31). In the case where $p \geq 1$ before $l$ is found, we simply compute $\delta$ by running the peeling algorithm and return $\delta$.

\begin{algorithm}
\caption{ApproximateDegeneracy($G, \epsilon, c$)}
\small
\textbf{Input:} Graph $G=(V, E)$, error factor $\epsilon \in (0, 1]$, constant $c$\\
\textbf{Output:} Approximate degeneracy $\delta$ of $G$ within a factor of $(1 + \epsilon)$ with probability $1-\frac{2}{n^c}$
\begin{algorithmic}[1]
\State $\epsilon_1 = \epsilon / 3, l=\frac{n}{(1 + \epsilon_1)}, p = \frac{2((1+c)\ln(n) + \ln(\log_{1+\epsilon_1}n))(1 + \epsilon_1)^2}{\epsilon_1^2 n}$
\State $Sampled(v) \leftarrow \emptyset$ for all $v \in V$
\State \textbf{while} $p < 1$ \textbf{do}
\State \indent $H \leftarrow \{v \in V \mid \deg(v) \geq l\}, L \leftarrow V \setminus H$
\State \indent $Q \leftarrow \emptyset$
\State \indent \textbf{for} each node $v \in H$ \textbf{do}
\State \indent \indent $k(v)\leftarrow \lceil p \deg(v) \rceil$
\State \indent \indent $t(v) \leftarrow k(v)$
\State \indent \indent \textbf{for} $i \in {1, ... , k(v)}$ \textbf{do}
\State \indent \indent \indent pick a neighbor $u$ of $v$ independently at random \label{line:sample}
\State \indent \indent \indent \textbf{if} $u \in L$ \textbf{then}
\State \indent \indent \indent \indent $t(v) = t(v) - 1$
\State \indent \indent \indent \indent \textbf{if} $t(v) < \frac{l \cdot k(v)}{\deg(v)}$ \textbf{then}
\State \indent \indent \indent \indent \indent $H \leftarrow H \setminus \{v\}, L \leftarrow L \cup \{v\}$
\State \indent \indent \indent \indent \indent add $v$ to $Q$
\State \indent \indent \indent \textbf{else}
\State \indent \indent \indent \indent add $v$ to $Sampled(u)$
\State \indent \textbf{while} $Q \neq \emptyset$ \textbf{do}
\State \indent \indent $u \leftarrow dequeue(Q)$
\State \indent \indent \textbf{for} each node $v \in Sampled(u)$ \textbf{do}
\State \indent \indent \indent  \textbf{if} $v \in H$ \textbf{then}
\State \indent \indent \indent \indent $t(v) = t(v) - 1$
\State \indent \indent \indent \indent \textbf{if} $t(v) < \frac{l \cdot k(v)}{\deg(v)}$ \textbf{then}
\State \indent \indent \indent \indent \indent $H \leftarrow H \setminus \{v\}, L \leftarrow L \cup \{v\}$
\State \indent \indent \indent \indent \indent add $v$ to $Q$
\State \indent \textbf{if} $H = \emptyset$ \textbf{then}
\State \indent \indent $l \leftarrow l / (1 + \epsilon_1)$
\State \indent \indent $p \leftarrow  p \cdot (1 + \epsilon_1)$
\State \indent \indent $Sampled(v) \leftarrow \emptyset$ for all $v \in V$
\State \indent \textbf{else}
\State \indent \indent \textbf{return} $l$ \label{line:return}
\State compute $\delta$ by running the peeling algorithm
\State \textbf{return} $\delta$
\end{algorithmic}
\label{alg:approxalgorithm}
\end{algorithm}

\section{Theoretical Guarantees}
\label{sec:theoreticalguarantees}

In this section, we will prove the probability of correctness and sublinear running time of Algorithm \ref{alg:approxalgorithm}.

\subsection{Correctness}
\label{subsec:degeneracycorrectness}
Let $(H, L)$ be a partition of the vertices of $G$. Initially $H = V$ and $L=\emptyset$. Let $d_H(v)$ denote the degree of a vertex $v \in H$ in the subgraph induced by the nodes in $H$ such that the following conditions hold for some $d_1$ and $d_2$:

\begin{conditions}$ $
\begin{enumerate}
    \item each node in $v \in H$ is always moved to $L$ if $d_H(v) < d_2$ 
    \item a node $v \in H$ is not moved to $L$ if $d_H(v) \geq d_1$
\end{enumerate}
\end{conditions}

Lemma \ref{lem:uniformsample} enables us to regard each sample as an independent uniformly random coin flip, whose outcome is heads if the neighbor $u$ incident to the sampled edge is in $H$, and the probability of heads is $d_H(v)/\deg(v)$. 

\begin{lemma}
\label{lem:uniformsample}
At any time during an iteration, the event that, in a single execution of Line 10, the neighbor picked by $v$ is in $H$ has probability $d_H(v)/\deg(v)$, and the set of events corresponding to the $k(v)$ executions of Line 10 are mutually independent.
\end{lemma}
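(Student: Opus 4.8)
The plan is to isolate the single node $v$ whose inner sampling loop (Lines 9--17) we are analyzing and to condition on the state of the algorithm at the instant that loop begins. First I would fix $v$ and let $S$ be the set of neighbors of $v$ that lie in $H$ at the start of $v$'s loop, so that $|S| = d_H(v)$ for that state and the quantity of interest is $|S|/\deg(v)$. The key structural observation I would establish is that $S$ does not change during the $k(v)$ iterations of the inner loop. The only modification to the partition $(H,L)$ made anywhere inside Lines 9--17 is the possible move of $v$ itself from $H$ to $L$ at Line 14; and since the graph is simple, $v$ is not one of its own neighbors, so removing $v$ from $H$ leaves the $H$-membership of every neighbor of $v$ untouched. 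Hence $d_H(v)$, and therefore the target probability $d_H(v)/\deg(v)$, is constant throughout the window of $v$'s $k(v)$ samples.

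With $S$ fixed, the rest is routine. Each execution of Line 10 picks a neighbor of $v$ uniformly at random with replacement, drawing fresh randomness independent of everything consumed so far. Conditioned on the state at the start of $v$'s loop (a function only of randomness used before the loop, which determines $S$ and $\deg(v)$), the $i$-th execution lands on a neighbor in $H$ exactly when the $i$-th uniform draw falls in $S$, an event of probability $|S|/\deg(v) = d_H(v)/\deg(v)$; this gives the first claim. Because the $k(v)$ draws are mutually independent uniform samples tested against the \emph{same} fixed set $S$, the indicator events ``the $i$-th sampled neighbor is in $H$'' are mutually independent Bernoulli trials with common bias $d_H(v)/\deg(v)$, which gives the second claim.

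The main obstacle to watch for — and really the only content of the argument — is that $H$ is modified adaptively during the iteration in response to the very samples being analyzed, so a priori the $k(v)$ draws of $v$ need not behave like coin flips with a single fixed bias. The resolution is precisely the invariance of $S$ noted above: every removal of a node $w \neq v$ (whether it shrinks $H$ between successive nodes handled in the outer loop at Line 6, or occurs in the later peeling phase at Lines 18--25) happens outside the window of $v$'s own $k(v)$ samples, and the sole in-window removal, that of $v$ itself, cannot alter the $H$-status of any neighbor of $v$. I would therefore phrase both the probability and the independence claims as conditional on the algorithm's state at the start of $v$'s loop; since $d_H(v)$ is measured in exactly that state, this matches the statement of the lemma and is precisely the form needed downstream, where each sample is treated as a coin flip for the Chernoff-type concentration bounds on $t(v)$.
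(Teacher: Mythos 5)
Your proposal is correct and follows essentially the same route as the paper's proof: both arguments reduce the claim to the fact that each execution of Line 10 is an independent uniform draw with replacement, so that, conditioned on the relevant state of $H$, each draw lands in $H$ with probability $d_H(v)/\deg(v)$ and the $k(v)$ indicator events are mutually independent. Your version is somewhat more careful than the paper's one-line ``Consider $H$ at a specific time'': you explicitly verify that the only in-window change to $H$ is the possible removal of $v$ itself, which cannot alter the $H$-membership of any neighbor of $v$, so the set being tested against is genuinely fixed across all $k(v)$ draws.
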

\begin{proof}
Consider $H$ at a specific time.
Let $E_i$ be the event that $v$'s $i^{th}$ sample (Line 10) is in $H$.
The event that any one neighbor is selected is an independent uniformly random event with probability with probability $1/\deg(v)$. These random choices are independent from the state of $H$ at any time and they are independent of each other since they are picked with replacement. Thus, at any point in the algorithm, if we fix the nodes in $H$, the probability of any node $v$ sampling a neighbor in $H$ is exactly the number of its neighbors in $H$, $d_H(v)$, over its degree. 
\end{proof}

Next, Lemma \ref{lem:t(v)value} states the value of $t(v)$ at any time for any node $v \in H$.

\begin{lemma}
\label{lem:t(v)value}
At any time, for any node $v \in H$, $t(v)$ is the number of $v$'s sampled neighbors in $H$. 
\end{lemma}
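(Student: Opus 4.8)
The plan is to prove the claim by induction on the sequence of \emph{state-changing events} executed within a single iteration of the outer while-loop, where such an event is either (a) a single execution of the sampling step at Line~10, or (b) a node being moved from $H$ to $L$ (a peel). First I would restate the claim in a form convenient for induction: at all times $t(v) = k(v) - \ell(v)$, where $\ell(v)$ denotes the number of $v$'s \emph{already-taken} samples whose picked neighbor currently lies in $L$, with all counts understood with multiplicity since sampling is done with replacement. Once $v$ has completed its inner for-loop, every one of its $k(v)$ samples has been taken, so $k(v) - \ell(v)$ is exactly the number of $v$'s sampled neighbors currently in $H$, which is the statement of the lemma.

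For the base case, immediately after Line~8 we have $t(v) = k(v)$ and $v$ has taken no samples, so $\ell(v) = 0$ and the invariant holds. For the inductive step on a sampling event (Line~10) in which $v$ picks neighbor $u$, there are two cases. If $u \in L$, this newly-taken sample is a loss, $\ell(v)$ increases by one, and Line~12 decrements $t(v)$ by one, preserving $t(v)=k(v)-\ell(v)$. If $u \in H$, the sample is not yet a loss, both $\ell(v)$ and $t(v)$ are unchanged, and Line~17 records the sample by appending $v$ to $Sampled(u)$; since Line~17 lies in the branch taken precisely when $u \in H$, the list $Sampled(u)$ accumulates, with multiplicity, exactly those samples directed at $u$ that were in $H$ when they were taken.

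For the inductive step on a peel event, suppose node $u$ is moved from $H$ to $L$; it is enqueued on $Q$ and later dequeued at Line~19. Each occurrence of a node $v$ in $Sampled(u)$ corresponds to one sample $v \to u$ that had been counted as in-$H$. For every such occurrence with $v$ still in $H$, that sample has just become a loss, so $\ell(v)$ increases by one and Line~22 decrements $t(v)$ by one, again preserving the invariant. The guard $v \in H$ at Line~21 is exactly what prevents a stale decrement: if $v$ was already peeled we no longer maintain $t(v)$ and must not alter it, and processing each occurrence in $Sampled(u)$ separately handles repeated samples correctly.

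The step I expect to be the main obstacle is arguing that the two decrement mechanisms are jointly exhaustive and mutually exclusive, so that each taken sample is charged as a loss \emph{exactly once}. The key fact that makes this work is \textbf{monotonicity}: within one iteration a node only ever moves from $H$ to $L$ and never back. Consequently, for a fixed sample $v \to u$, the neighbor $u$ is either already in $L$ when the sample is taken (caught once by Line~12) or is in $H$ at sample time and peeled at most once afterward (caught once by Line~22 via its recorded entry in $Sampled(u)$); these cases are disjoint and together cover every way a sample can come to point into $L$. Combined with the $v \in H$ guard, this shows $\ell(v)$ is incremented exactly once per lost sample, so $t(v)=k(v)-\ell(v)$ holds throughout and the lemma follows.
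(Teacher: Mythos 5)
Your proof is correct and takes essentially the same approach as the paper's: the invariant that $t(v)$ equals $k(v)$ minus the number of $v$'s samples currently lying in $L$, maintained by the immediate decrement when a sampled neighbor is already in $L$ (Line 12) and the deferred decrement through $Sampled(u)$ and the queue $Q$ when a sampled neighbor is peeled later (Line 22). The paper states this in three informal sentences; your induction over sampling and peel events, and especially the monotonicity argument that each lost sample is charged exactly once, is simply a more careful rendering of the same idea.
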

\begin{proof}
For each node $v \in H$, $t(v)$ is initialized to the total number of samples, $k(v)$. Whenever $v$ discovers that a sampled neighbor is not in $H$, either immediately in Line 11 or later in Line 21, $t(v)$ is decremented. So, $t(v) = k(v)\text{ } - $ the number of sampled neighbors in $L$.
\end{proof}

Then, we have that Lemma \ref{lem:expectedt(v)} follows directed from Lemma \ref{lem:t(v)value}.

\begin{lemma}
\label{lem:expectedt(v)}
The expected value of $t(v)$ for each node $v \in H$ is $E[t(v)]=\frac{d_H(v)}{\deg(v)}k(v)$.
\end{lemma}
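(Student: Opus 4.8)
The plan is to recognize $t(v)$ as a sum of independent indicator variables and then apply linearity of expectation. By Lemma \ref{lem:t(v)value}, $t(v)$ equals the number of $v$'s sampled neighbors that lie in $H$. So I would write $t(v) = \sum_{i=1}^{k(v)} X_i$, where $X_i$ is the indicator random variable that equals $1$ if the neighbor picked on the $i$-th execution of Line 10 is in $H$, and $0$ otherwise.

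Next I would invoke Lemma \ref{lem:uniformsample} to pin down the distribution of each $X_i$. That lemma tells us that each of the $k(v)$ samples lands in $H$ with probability $d_H(v)/\deg(v)$, so $E[X_i] = d_H(v)/\deg(v)$ for every $i$ (and in fact the $X_i$ are mutually independent, making $t(v)$ binomially distributed, though independence is not needed for the expectation). Applying linearity of expectation then gives
\[
E[t(v)] = \sum_{i=1}^{k(v)} E[X_i] = \sum_{i=1}^{k(v)} \frac{d_H(v)}{\deg(v)} = k(v) \cdot \frac{d_H(v)}{\deg(v)},
\]
which is exactly the claimed value.

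I do not anticipate any genuine obstacle here, since the statement is flagged as following directly from Lemma \ref{lem:t(v)value} and the result is an immediate consequence of the two preceding lemmas. The only point requiring minor care is the implicit conditioning: strictly speaking $d_H(v)$ refers to the number of neighbors of $v$ in $H$ at the fixed moment under consideration, so the expectation should be read as conditional on the current set $H$. Since Lemma \ref{lem:uniformsample} already establishes that the sampling coin flips are independent of the state of $H$, this conditioning is harmless and the linearity-of-expectation computation goes through verbatim.
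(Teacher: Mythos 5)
Your proof is correct and matches the paper's intent: the paper simply asserts that the lemma ``follows directly'' from Lemma \ref{lem:t(v)value}, and your decomposition into indicators with Lemma \ref{lem:uniformsample} plus linearity of expectation is exactly the implicit argument, spelled out. No discrepancy to report.
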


Next, we relate $d_H(v)$ to the core values of nodes in $L$ and $H$.

\begin{lemma}
\label{lem:hconditions}
Assuming conditions (1) and (2) above hold, at Line 26, the nodes in $H$ have core values at least $d_2$ and the nodes in $L$ have core values less than $d_1$.
\end{lemma}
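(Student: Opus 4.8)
The plan is to prove the two assertions separately, since each follows from exactly one of the two conditions together with the observation that at Line 26 the peeling has terminated: the inner queue $Q$ is empty, so no node of $H$ can currently satisfy the peeling criterion. I would work entirely within the abstract framework of the Conditions (starting from $H = V$, $L = \emptyset$), so that the initial degree-threshold partition and the sampling-based peeling are both absorbed into the same sequence of moves from $H$ to $L$.

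For the first assertion, that every node remaining in $H$ has core value at least $d_2$, I would invoke condition (1). At termination no node $v \in H$ can have $d_H(v) < d_2$, for otherwise condition (1) would force it to be moved to $L$. Hence the induced subgraph $G_H$ has minimum degree at least $d_2$. Each node of $H$ therefore lies in a connected subgraph of minimum degree at least $d_2$ (its component in $G_H$), which is contained in a $d_2$-core, so every node of $H$ has core value at least $d_2$.

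For the second assertion, that every node moved to $L$ has core value less than $d_1$, I would run a peeling-order argument. List the nodes of the final $L$ as $v_1, v_2, \ldots, v_r$ in the order they were moved from $H$, and let $H_i$ be the contents of $H$ just before $v_i$ is removed, so $H_i = H^\ast \cup \{v_i, v_{i+1}, \ldots, v_r\}$ where $H^\ast$ is the final $H$. By the contrapositive of condition (2), each $v_i$ satisfies $d_{H_i}(v_i) < d_1$ at the instant it is peeled. Suppose for contradiction that some node of $L$ has core value at least $d_1$; then there is an induced subgraph $S$ of minimum degree at least $d_1$ meeting $L$. Let $v_i$ be the earliest-peeled node of $L$ lying in $S$. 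Since $v_1, \ldots, v_{i-1}$ all lie outside $S$, we have $S \subseteq H_i$, whence $d_{H_i}(v_i) \geq d_S(v_i) \geq d_1$, contradicting $d_{H_i}(v_i) < d_1$. Thus no node of $L$ can have core value at least $d_1$.

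The routine ingredients are the monotonicity $d_{H_i}(v_i) \geq d_S(v_i)$ (immediate from $S \subseteq H_i$) and the reduction to the subgraph characterization of core value. I expect the main obstacle to be the careful setup of the peeling order in the second part: one must select the \emph{first} peeled node inside $S$ so that all earlier-peeled nodes are guaranteed to lie outside $S$, which is precisely what validates the containment $S \subseteq H_i$ and hence the degree comparison. This is the standard argument that a threshold-based peeling process yields a valid upper bound on core values, here adapted to the abstract conditions (1) and (2).
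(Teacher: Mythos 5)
Your proposal is correct and follows essentially the same route as the paper: condition (1) gives minimum degree $d_2$ in the final induced subgraph $G_H$, hence core value at least $d_2$ for every node of $H$, and the second assertion is proved by taking the first-peeled node with core value at least $d_1$ and deriving a contradiction with condition (2). The only cosmetic difference is that you phrase the high-core-value witness as an explicit induced subgraph $S$ of minimum degree at least $d_1$ and formalize the peeling order via the sets $H_i$, whereas the paper argues directly about the neighbors of $u$ lying in the $d_1$-core; the content is identical.
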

\begin{proof}
By condition (1) above, every node in $H$ has $\geq d_2$ neighbors in $H$. So, every node in $H$ must have a core value of at least $d_2$. 

Suppose for a contradiction that a node is moved to $L$ with a core value $\geq d_1$. Then there must be a first node $u$ for which this happens. Since the core value of $u$ is $\geq d_1$ by the assumption, $u$ must have $ \geq d_1$ neighbors with core values at least $d_1$. Since $u$ is the first such node to be moved to $L$ with core value $\geq d_1$, none of $u$'s neighbors which are in the $d_1$-core have been moved to $L$. Hence, $d_H(u) \geq d_1$ and $u$ cannot be moved to $L$ by condition (2).
\end{proof}

Now, we will prove that the bound $\frac{l \cdot k(v)}{\deg(v)}$ (Line 13 and 23), which we require for $t(v)$ for a node $v$ to remain in $H$, gives us an approximate bound for the degree of $v$ in $H$.

\begin{lemma}
\label{lem:probbounds}
When a node is moved to $L$, for every node $v \in H$
\begin{enumerate}
    \item if $d_H(v) \geq l \cdot (1 + \delta_1)$, then $t(v) \geq \frac{l \cdot k(v)}{\deg(v)}$, and
    \item if $d_H(v) < l / (1 + \delta_2)$, then $t(v) < \frac{l \cdot k(v)}{\deg(v)}$
\end{enumerate}
with high probability, where $\delta_1 = \epsilon_1$ and $\delta_2=\frac{3}{2}\epsilon_1$ for $\epsilon_1 \leq 1$.
\end{lemma}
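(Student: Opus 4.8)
The plan is to recognize that, by Lemmas~\ref{lem:uniformsample}, \ref{lem:t(v)value}, and \ref{lem:expectedt(v)}, the quantity $t(v)$ is a sum of $k(v)$ mutually independent indicator variables each equal to $1$ with probability $d_H(v)/\deg(v)$, so $t(v)$ concentrates around its mean $\mu := E[t(v)] = \frac{d_H(v)}{\deg(v)}k(v)$ and multiplicative Chernoff bounds apply. The threshold tested in Lines~13 and 23 is $T := \frac{l\,k(v)}{\deg(v)}$, and the crucial observation is that the product $l\,p$ is invariant across iterations: each iteration scales $l$ by $1/(1+\epsilon_1)$ and $p$ by $(1+\epsilon_1)$ (Lines~27--28), so $l\,p$ always equals its initial value $\frac{2B(1+\epsilon_1)}{\epsilon_1^2}$, where $B := (1+c)\ln n + \ln\log_{1+\epsilon_1} n$. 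Since $k(v) = \lceil p\deg(v)\rceil \ge p\deg(v)$, this yields the uniform lower bound $T \ge l\,p$, which is precisely the sampling budget that will feed the Chernoff exponent.

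For part~(1), I would assume $d_H(v) \ge l(1+\delta_1)$, which gives $\mu \ge (1+\delta_1)T$, i.e. $T \le \mu/(1+\delta_1) = (1-\gamma)\mu$ with $\gamma = \delta_1/(1+\delta_1) < 1$. The lower-tail bound then gives $P[t(v) < T] \le P[t(v) < (1-\gamma)\mu] \le \exp(-\gamma^2\mu/2)$. Using $\mu \ge (1+\delta_1)\,l\,p$ and substituting $\delta_1 = \epsilon_1$, the exponent simplifies to exactly $\frac{\delta_1^2\, l\,p}{2(1+\delta_1)} = B$, so the failure probability is at most $e^{-B} = \frac{1}{n^{1+c}\log_{1+\epsilon_1} n}$.

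Part~(2) is the more delicate direction. Here $d_H(v) < l/(1+\delta_2)$ gives $T > (1+\delta_2)\mu$, and one must resist simply bounding $P[t(v) \ge T] \le P[t(v)\ge(1+\delta_2)\mu]$ and applying the upper tail at $\mu$, because $\mu$ can be arbitrarily small when $d_H(v)$ is small, rendering that exponent useless. Instead I would use that the binomial upper tail $P[t(v)\ge T]$ is monotone increasing in the success probability $d_H(v)/\deg(v)$, so it is maximized at the boundary $d_H(v) = l/(1+\delta_2)$, where $\mu = T/(1+\delta_2) \ge l\,p/(1+\delta_2)$ and $T = (1+\delta_2)\mu$ exactly. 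Applying $P[t(v) \ge (1+\delta_2)\mu] \le \exp\!\left(-\frac{\delta_2^2\mu}{2+\delta_2}\right)$ and substituting $\delta_2 = \tfrac32\epsilon_1$ and $l\,p = \frac{2B(1+\epsilon_1)}{\epsilon_1^2}$ yields an exponent of $\frac{\frac{9}{2}B(1+\epsilon_1)}{(2+\frac32\epsilon_1)(1+\frac32\epsilon_1)}$, which is $\ge B$ precisely when $\epsilon_1^2 \le 10/9$, hence for all $\epsilon_1 \le 1$; the failure probability is again at most $e^{-B}$.

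The main obstacle I anticipate is twofold: first, the monotonicity/boundary argument in part~(2), which is exactly what forces the asymmetric choice $\delta_2 = \tfrac32\epsilon_1$ (larger than $\delta_1 = \epsilon_1$) in order to absorb the weaker $\frac{1}{2+\gamma}$ constant of the upper-tail bound relative to the $\frac12$ of the lower tail; and second, checking that the initial $p$ was engineered so that the single budget $l\,p$ drives both exponents to at least $B$ at once. Finally, since $e^{-B} = \frac{1}{n^{1+c}\log_{1+\epsilon_1} n}$ is a per-node, per-iteration bound, a union bound over the at most $n$ nodes and at most $\log_{1+\epsilon_1} n$ iterations upgrades each statement to hold with probability at least $1 - \frac{1}{n^{c}}$, matching the high-probability claim of the lemma (and, combining both parts, the $1 - \frac{2}{n^c}$ guarantee of Algorithm~\ref{alg:approxalgorithm}).
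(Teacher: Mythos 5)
Your proposal is correct and follows essentially the same route as the paper's proof: the same reduction to independent coin flips via Lemma~\ref{lem:uniformsample}, the same two Chernoff tails with $\delta_1=\epsilon_1$ and $\delta_2=\tfrac32\epsilon_1$, the same invariant $l\,p = \frac{2B(1+\epsilon_1)}{\epsilon_1^2}$ driving both exponents to $B$, and the same union bound over $n$ nodes and $\log_{1+\epsilon_1}n$ iterations. Your monotonicity/boundary argument in part~(2) is simply an explicit justification of the ``extended'' upper-tail bound $\Pr(X\ge(1+\delta_2)\mu')\le e^{-\delta_2^2\mu'/(2+\delta_2)}$ for $\mu'\ge\mu$ that the paper invokes directly (and whose stated condition $\mu'\le\mu$ in Equation~\ref{eq:chernoff2} is a typo), so the two proofs coincide in substance.
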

\begin{proof}
By Lemma \ref{lem:uniformsample}, each sample can be regarded as an independent random coin flip which is heads when a sampled neighbor is in $H$. We will use the following Chernoff bounds from Theorem 2.1 in \cite{chernoffbound}. Let $X$ be the sum of the number of heads and let $\mu = E[X]$. For $0 < \delta_1, \delta_2$
\begin{equation}
\label{eq:chernoff1}
    \Pr(X < (1 - \delta_1)\mu') \leq e^{-\frac{-\delta_1^2\mu'}{2}} \text{ for all } \mu' \leq \mu 
\end{equation}
and
\begin{equation}
\label{eq:chernoff2}
    \Pr(X \geq (1 + \delta_2)\mu') \leq e^{-\frac{-\delta_2^2\mu'}{2+\delta_2}} \text{ for all } \mu' \leq \mu 
\end{equation}

In our experiment, we have $k(v)$ coin flips where the probability of heads is $d_H(v)/\deg(v)$ and by Lemma \ref{lem:expectedt(v)}, $E[t(v)]=\frac{d_H(v)}{\deg(v)} k(v) = \mu$.

To prove (1), when $d_H(v) \geq l \cdot (1 + \delta_1)$, let $\mu'=(1 + \delta_1) \cdot l \cdot k(v) / \deg(v)$. If $d_H(v) \geq l \cdot (1 + \delta_1)$, then $E[t(v)] \geq \mu'$. Now, we can bound the probability that $t(v) < l \cdot k(v) / \deg(v)$, which we can rewrite as $t(v) < (1 - \frac{\delta_1}{1 + \delta_1})\mu'$. By the Chernoff bound in Equation \ref{eq:chernoff1}, the probability that $t(v) < (1 - \frac{\delta_1}{1 + \delta_1})\mu'$ is $ \leq e^{\frac{-\delta_1^2\mu'}{2(1 + \delta_1)^2}}=e^{\frac{-\delta_1^2 \cdot  l \cdot k(v) / \deg(v)}{2(1 + \delta_1)}}$.

To prove (2), when $d_H(v) < l / (1 + \delta_2)$, let $\mu'=\frac{l}{(1+\delta_2)}\frac{k(v)}{\deg(v)} > \frac{d_H(v)}{\deg(v)}k(v) = \mu$. By the Chernoff bound in Equation \ref{eq:chernoff2}, $\Pr(t(v)) \geq l \cdot k(v) / \deg(v) = \Pr(t(v) \geq (1 + \delta_2)\delta') \leq e^\frac{-\mu'\delta_2^2}{2+\delta_2}=e^\frac{-\delta_2^2l \cdot k(v)/\deg(v)}{(1 + \delta_2)(2 + \delta_2)}$.

Since $k(v) \geq p \cdot \deg(v)$, we have $l \cdot k(v) / \deg(v) \geq l \cdot p$. So, the probability that (1) fails to hold is $\leq e^\frac{-\delta_1^2 l\cdot p}{2(1 + \delta_1)}$ and the probability that (2) fails to hold is $\leq e^\frac{\delta_2^2 l \cdot p}{(1 + \delta_2)(2 + \delta_2)}$. We let $\delta_2=(3/2)\delta_1$ to get that bound on the probability of error (2) is $e^\frac{-(9/4)\delta_1^2 l \cdot p}{2 + (9/2)\delta_1+(9/4)\delta_1^2} \leq e^\frac{-\delta_1^2 l \cdot p}{2 (1 + \delta_1)}$ when $\delta_1 \leq 1$. Recall that $\delta_1=\epsilon_1$. 

Then, we take the union bound of both error probabilities (1) and (2) over all $n$ nodes. Since $l$ is initialized to $n / (1 + \epsilon_1)$, and we decrease $l$ by a factor of $(1 + \epsilon_1)$ in each iteration, the algorithm will take less than $\log_{1 + \epsilon_1}n$ iterations. So, we multiply both errors (1) and (2) by $e^{\ln n + \ln \log_{1 + \epsilon_1}n}$ for the union bound and over all possible iterations. Therefore, the probability each error is less than $1/n^c$ when $l \cdot p \geq \frac{2((1 + c)\ln n + \ln \log_{1 + \epsilon_1}n)(1 + \epsilon_1)}{\epsilon_1^2}$ and the probability any error occurs is less than $2/n^c$ for any $c$.
\end{proof}

Next, we show that Algorithm \ref{alg:approxalgorithm} satisfies the assumptions of Lemma \ref{lem:hconditions} with high probability.

\begin{lemma}
\label{lem:preconditionsproof}
In any iteration of Algorithm \ref{alg:approxalgorithm},
\begin{enumerate}
    \item every node $v \in H$ is moved to $L$ when $d_H(v) < l / (1 + (3/2)\epsilon_1)$
    \item no node $v \in H$ is moved to $L$ when $d_H(v) \geq l \cdot (1 + \epsilon_1)$
\end{enumerate}
with high probability.
\end{lemma}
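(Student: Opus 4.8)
The plan is to derive this lemma almost directly from Lemma~\ref{lem:probbounds} by reading its two probabilistic bounds on $t(v)$ through the lens of the algorithm's peeling rule. Recall that Algorithm~\ref{alg:approxalgorithm} moves a node $v$ from $H$ to $L$ precisely when its counter satisfies $t(v) < \frac{l \cdot k(v)}{\deg(v)}$ (Lines 13 and 23). Setting $\delta_1 = \epsilon_1$ and $\delta_2 = \frac{3}{2}\epsilon_1$, Lemma~\ref{lem:probbounds}(1) says that if $d_H(v) \geq l(1+\epsilon_1)$ then $t(v) \geq \frac{l \cdot k(v)}{\deg(v)}$, and Lemma~\ref{lem:probbounds}(2) says that if $d_H(v) < l/(1+\frac{3}{2}\epsilon_1)$ then $t(v) < \frac{l \cdot k(v)}{\deg(v)}$, both with high probability. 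I would simply identify $d_1 = l(1+\epsilon_1)$ and $d_2 = l/(1+\frac{3}{2}\epsilon_1)$, so that the two parts of this lemma become restatements of the two parts of Lemma~\ref{lem:probbounds} phrased in terms of whether the peeling condition fires.

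Concretely, for part (2) I would argue via the peeling rule: whenever $d_H(v) \geq l(1+\epsilon_1)$, Lemma~\ref{lem:probbounds}(1) gives $t(v) \geq \frac{l \cdot k(v)}{\deg(v)}$, so the test at Lines 13 and 23 never evaluates to true for $v$ and it is never moved to $L$. For part (1), I would use that by Lemma~\ref{lem:t(v)value} the counter $t(v)$ equals the number of $v$'s sampled neighbors currently in $H$, and that it is decremented --- and the peeling condition re-checked --- exactly when one of those sampled neighbors leaves $H$ (Lines 12--13 during sampling and Lines 21--23 during queue processing). Hence the condition is tested after every change to $t(v)$, so once $d_H(v)$ drops below $l/(1+\frac{3}{2}\epsilon_1)$, Lemma~\ref{lem:probbounds}(2) forces $t(v) < \frac{l \cdot k(v)}{\deg(v)}$ and the decrement that brings $t(v)$ below the threshold causes $v$ to be detected and moved to $L$.

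The main obstacle is not the translation itself but justifying that the snapshot-style concentration bounds of Lemma~\ref{lem:probbounds} may legitimately be applied to $H$, which is random, self-referentially defined (a node's own samples influence whether it is peeled, and peeling reshapes the very set $H$ against which $d_H(v)$ and $t(v)$ are measured), and evolving at each moment the peeling condition is evaluated. I would lean on Lemma~\ref{lem:uniformsample}, which lets us treat $v$'s samples as independent coin flips whose head-probability is the current ratio $d_H(v)/\deg(v)$, and on the fact that $H$ only ever shrinks within an iteration, so $d_H(v)$ and $t(v)$ are both monotone non-increasing. The care needed is to ensure the bad events are controlled simultaneously across all $n$ nodes, all $O(\log_{1+\epsilon_1} n)$ iterations, and all the discrete times at which nodes are moved; this is exactly the union bound already carried out inside Lemma~\ref{lem:probbounds}, which bounds the total failure probability by $2/n^c$. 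Conditioned on that single high-probability event, the evolution of $H$ behaves as if deterministic and both parts hold for every node, yielding conditions (1) and (2) with $d_1 = l(1+\epsilon_1)$ and $d_2 = l/(1+\frac{3}{2}\epsilon_1)$ as required by Lemma~\ref{lem:hconditions}.
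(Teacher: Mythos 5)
Your proposal is correct and follows essentially the same route as the paper: set $d_1 = l(1+\epsilon_1)$ and $d_2 = l/(1+\tfrac{3}{2}\epsilon_1)$ and read the two high-probability bounds of Lemma~\ref{lem:probbounds} on $t(v)$ through the peeling test $t(v) < \frac{l \cdot k(v)}{\deg(v)}$ at Lines 13 and 23. Your additional remarks about the evolving set $H$ and the union bound over nodes and iterations only make explicit what the paper's one-line proof leaves implicit.
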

\begin{proof}
Let $d_1 = l\cdot(1 + \epsilon_1)$ and $d_2=l / (1 + (3/2)\epsilon_1)$. By Lemma \ref{lem:probbounds}, with high probability, no node $v \in V$ with $d_H(v) > d_1$ has $t(v) < l \cdot k(v) / \deg(v)$, so $v$ is not moved to $L$. Also, every node $v \in H$ with $d_H(v) < l / (1 + (3/2)\epsilon_1)$ has $t(v) < l \cdot k(v) / \deg(v)$, so $v$ is moved to $L$.
\end{proof}

We can now prove the bounds on the approximate degeneracy output by Algorithm \ref{alg:approxalgorithm}.

\begin{lemma}\label{lem:delta}
Let $l$ be the output returned by the algorithm in line \ref{line:return} in the case a value is returned. With high probability, for each $v \in V$, $c(v) < l \cdot (1 + \epsilon_1)^2$ and for all $v\in H$, $c(v) \geq l /(1 + (3/2)\epsilon_1)$. In particular, $\delta / (1 + \epsilon_1)^2 < l \leq \delta(1 + (3/2)\epsilon_1)$. 
\end{lemma}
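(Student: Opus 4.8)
The plan is to combine Lemma~\ref{lem:preconditionsproof}, which certifies that the algorithm's peeling meets conditions (1) and (2), with Lemma~\ref{lem:hconditions}, which converts those conditions into core-value bounds, and to apply this pair to \emph{two} iterations: the one that returns and the one immediately preceding it. First I would note that a value is returned at Line~\ref{line:return} only through the \textbf{else} branch at Line~30, so in the returning iteration we have $H \neq \emptyset$ at Line~26. Taking $d_1 = l(1+\epsilon_1)$ and $d_2 = l/(1+(3/2)\epsilon_1)$, Lemma~\ref{lem:preconditionsproof} gives conditions (1) and (2) for this iteration with high probability, whence Lemma~\ref{lem:hconditions} yields that every $v \in H$ has $c(v) \geq d_2 = l/(1+(3/2)\epsilon_1)$. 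This settles the second claim directly.

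For the first claim, that every $v \in V$ has $c(v) < l(1+\epsilon_1)^2$, the returning iteration alone is insufficient: Lemma~\ref{lem:hconditions} only upper-bounds the core values of nodes in $L$ (by $d_1$), while the nodes that survive in $H$ remain unbounded from above. The key idea is therefore to invoke the \emph{previous} iteration. Since $l$ starts at $n/(1+\epsilon_1)$ and is divided by $(1+\epsilon_1)$ each time $H$ empties, the returned $l$ is the largest threshold for which $H$ is nonempty; hence the immediately preceding iteration ran with threshold $l' = l(1+\epsilon_1)$ and ended with $H = \emptyset$ at its Line~26. Applying Lemma~\ref{lem:preconditionsproof} and then Lemma~\ref{lem:hconditions} to that iteration, with $d_1' = l'(1+\epsilon_1) = l(1+\epsilon_1)^2$, shows that every node, all of which now lie in $L$, has core value strictly below $l(1+\epsilon_1)^2$. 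The sole boundary case is the very first iteration, where no predecessor exists; there $l = n/(1+\epsilon_1)$, so the target bound reads $c(v) < n(1+\epsilon_1)$, which holds trivially because every core value is at most $n-1$.

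Finally I would assemble the two claims. Writing $\delta = \max_{v \in V} c(v)$, the first claim gives $\delta < l(1+\epsilon_1)^2$, i.e.\ $l > \delta/(1+\epsilon_1)^2$; and selecting any node of the nonempty $H$, the second claim gives $\delta \geq c(v) \geq l/(1+(3/2)\epsilon_1)$, i.e.\ $l \leq \delta(1+(3/2)\epsilon_1)$. Together these are precisely the stated sandwich $\delta/(1+\epsilon_1)^2 < l \leq \delta(1+(3/2)\epsilon_1)$.

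I expect the main obstacle to be the cross-iteration bookkeeping rather than any individual inequality. I must argue that the high-probability guarantees of Lemma~\ref{lem:probbounds} hold \emph{simultaneously} in both the returning and the preceding iteration, which is exactly why its union bound was taken over all $\log_{1+\epsilon_1} n$ iterations in addition to all $n$ nodes, and I must justify that the preceding iteration genuinely terminated with $H = \emptyset$ so that its upper bound applies to \emph{every} vertex, including those that repopulate $H$ once the threshold drops. Dispatching the first-iteration edge case cleanly is the only remaining point requiring care.
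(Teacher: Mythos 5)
Your proposal is correct and follows essentially the same route as the paper's proof: both derive the lower bound $c(v) \geq l/(1+(3/2)\epsilon_1)$ from Lemma~\ref{lem:preconditionsproof} and Lemma~\ref{lem:hconditions} applied to the returning iteration, and the upper bound $c(v) < l(1+\epsilon_1)^2$ from the preceding iteration (threshold $l(1+\epsilon_1)$) in which $H$ emptied. If anything, your write-up is more explicit than the paper's about why the upper bound covers all of $V$ and about the first-iteration edge case, both of which the paper glosses over.
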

\begin{proof}
In the iteration when the threshold was $l \cdot (1 + \epsilon)$, all nodes in $H$ were moved to $L$. By Lemma \ref{lem:preconditionsproof}, at the end of the algorithm, when the threshold is $l$, $d_H(v)$ for each node $v \in H$ satisfies $d_H(v) \geq  l / (1 + (3/2)\epsilon_1)$ and since $v$ was moved to $L$ in the previous iteration, $d_H(v) <l\cdot (1 + \epsilon_1)^2$. Let $c(v)$ denote the core number of a node $v$. Then by Lemma \ref{lem:hconditions}, for each $v \in H$ we have $c(v) < l \cdot (1 + \epsilon_1)^2$ and $c(v) \geq l / (1 + (3/2)\epsilon_1)$, which gives us the bounds $c(v) / (1 + \epsilon_1)^2 < l \leq c(v)(1 + (3/2)\epsilon_1)$. Since $l$ is the highest threshold for which not all nodes in $H$ were moved to $L$, it follows that $\delta / (1 + \epsilon_1)^2 < l \leq \delta(1 + (3/2)\epsilon_1)$.

\end{proof}

We observe that if we set $\epsilon_1 = \epsilon/3$, we get  $(1 + \epsilon/3)^2=1 + 2\epsilon/3 + \epsilon^2/9$ which is less than $1 + \epsilon$ when $\epsilon \leq 1$. Also, if no $l$ is returned and $p\geq 1$, we run the peeling algorithm and return $\delta$. Thus, Algorithm \ref{alg:approxalgorithm} gives a $(1 + \epsilon)$ approximation.

\begin{lemma}
\label{lem:correctness}
Given a graph $G$, Algorithm \ref{alg:approxalgorithm} outputs a $(1 + \epsilon)$ approximation to the degeneracy $\delta$ of $G$.
\end{lemma}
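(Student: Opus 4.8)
The plan is to derive the correctness guarantee almost entirely by assembling the bounds already established in Lemma~\ref{lem:delta}, together with the two arithmetic observations recorded immediately before the statement. The argument splits according to how Algorithm~\ref{alg:approxalgorithm} terminates: either it returns a threshold $l$ at Line~\ref{line:return}, or the while-loop exits with $p \geq 1$ and the algorithm falls back to exact peeling.

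First I would handle the returned-threshold case. Lemma~\ref{lem:delta} gives, with high probability, $\delta/(1+\epsilon_1)^2 < l \leq \delta(1 + (3/2)\epsilon_1)$. Substituting $\epsilon_1 = \epsilon/3$, the upper bound becomes $l \leq \delta(1 + \epsilon/2) \leq \delta(1+\epsilon)$. For the lower bound I would invoke the observation that $(1+\epsilon/3)^2 = 1 + 2\epsilon/3 + \epsilon^2/9 < 1+\epsilon$ whenever $\epsilon \leq 1$; since the hypothesis $\epsilon \in (0,1]$ guarantees $\epsilon \leq 1$, this yields $l > \delta/(1+\epsilon_1)^2 > \delta/(1+\epsilon)$. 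Combining the two, $\delta/(1+\epsilon) < l \leq \delta(1+\epsilon)$, so $l$ lies within a factor of $(1+\epsilon)$ of $\delta$, as required.

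Next I would dispose of the fallback case: if the loop terminates without returning, the algorithm computes $\delta$ exactly by running the peeling algorithm at Line~32 and returns it, which trivially satisfies the $(1+\epsilon)$ guarantee. The high-probability qualifier carries over unchanged, since it is inherited from Lemma~\ref{lem:delta} and ultimately from the union bound in Lemma~\ref{lem:probbounds}.

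I do not expect a genuine obstacle here, as the substantive probabilistic work is already completed in Lemmas~\ref{lem:probbounds}--\ref{lem:delta}. The only points requiring care are confirming that both the upper and lower bounds on $l$ collapse into a single two-sided factor-$(1+\epsilon)$ statement under the choice $\epsilon_1 = \epsilon/3$, and verifying that the inequality $(1+\epsilon/3)^2 < 1+\epsilon$ genuinely relies on the available hypothesis $\epsilon \leq 1$ rather than some stronger assumption.
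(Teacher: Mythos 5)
Your proposal is correct and follows essentially the same route as the paper: the paper's justification is exactly the short argument you give, namely combining the bounds $\delta/(1+\epsilon_1)^2 < l \leq \delta(1+(3/2)\epsilon_1)$ from Lemma~\ref{lem:delta} with the observation that $(1+\epsilon/3)^2 < 1+\epsilon$ for $\epsilon \leq 1$, plus the exact-peeling fallback when $p \geq 1$. No substantive differences to report.
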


\subsection{Running Time}
\label{subsec:degeneracyrunningtime}
Let $j$ be the number of iterations (of the while-loop at Line 3) which Algorithm \ref{alg:approxalgorithm} takes to compute the approximate degeneracy of an input graph $G$. When the output $l=n/(1 + \epsilon_1)^j$, then $j = \log_{1+\epsilon_1}(n/l)$. Consider the cost of sampling the neighbors of any node $v \in H$. Let $k_i(v)$ and $p_i$ be the number of neighbors sampled and the value of $p$, respectively, in any iteration $i$ of the algorithm. The number of neighbors of $v$ which are sampled over the algorithm is: 
\begin{align*}
    &\leq \sum_{i=1}^j k_i(v) = \sum_{i=1}^j \lceil p_i \deg(v) \rceil \\
    &\leq \sum_{i=1}^j (p_i \deg(v) + 1) = \log_{1 + \epsilon_1}(n/l) + \sum_{i=1}^j p_i \deg(v)\\
    &= \log_{1 + \epsilon_1}(n/l) + \tfrac{2((1+c)\ln(n) + \ln(\log_{1+\epsilon_1}n))(1 + \epsilon_1)^2}{\epsilon_1^2 n} \deg(v) \sum_{i=1}^j (1+\epsilon_1)^i \\
    &\leq \log_{1 + \epsilon_1}(n/l) + \tfrac{2((1+c)\ln(n) + \ln(\log_{1+\epsilon_1}n))(1 + \epsilon_1)^2}{\epsilon_1^2 n} \deg(v) \tfrac{(1+\epsilon_1)^{j+1}}{\epsilon_1} \\
    &= \log_{1 + \epsilon_1}(n/l) + \tfrac{2((1+c)\ln(n) + \ln(\log_{1+\epsilon_1}n))(1 + \epsilon_1)^2}{\epsilon_1^2 n} \deg(v) \tfrac{(1+\epsilon)(n/l)}{\epsilon_1} \\
    &= \log_{1 + \epsilon_1}(n/l) + \tfrac{2((1+c)\ln(n) + \ln(\log_{1+\epsilon_1}n))(1 + \epsilon_1)^3}{\epsilon_1^3l} \deg(v)
\end{align*}
When sampling neighbors, every edge $(u, v)$ may be examined when sampled in line \ref{line:sample} by $u$ or $v$ or both. We can attribute the cost of sampling the endpoints of an edge to one of its endpoints and take twice the cost to get an upper bound on the number of samples made.

First, we consider the nodes $v \in H$ which remain in $H$ by the end of iteration $j$. In iteration $j-1$, $v$ was moved to $L$. For each edge $(u, v)$, either $u$ was moved to $L$ before $v$, in which case we attribute the edge in iteration $j$ to $u$, otherwise we attribute the edge to $v$. The number of remaining edges incident to $v$, which are incident to nodes in $H$ when $v$ was moved to $L$, is $d_H(v)$ at the time of the move (in iteration $j-1$). By Lemma \ref{lem:preconditionsproof}, $d_H(v) < l \cdot (1 + \epsilon_1)^2$ with high probability. Thus, the total number of edges attributed to $v$ is less than $l \cdot (1 + \epsilon)^2$. 

Next, consider the nodes $u \in H$ which were moved to $L$ in iteration $j$. By Lemma \ref{lem:preconditionsproof}, $d_H(u)< l \cdot (1 + \epsilon_1)$. So, the total number of edges attributed to $u$ is less than $l \cdot (1 + \epsilon_1)$. 

Hence, the total number of edges attributed to all the nodes in $H$ is less than $l \cdot (1 + \epsilon_1)^2$. We replace $\deg(v)$ in the equation for the cost of sampling neighbors by the number of attributed edges. Thus, the number of samples of attributed edges for any node in $H$ is less than $\log_{1 + \epsilon_1}(n/l) + \frac{2((1+c)\ln(n) + \ln(\log_{1+\epsilon_1}n))(1 + \epsilon_1)^3}{\epsilon_1^3l} l \cdot (1 + \epsilon_1)^2$ $= \log_{1 + \epsilon_1}(n/l) + \frac{2((1+c)\ln(n) + \ln(\log_{1+\epsilon_1}n))(1 + \epsilon_1)^5}{\epsilon_1^3}$ which we rewrite as $\log_{1 + \epsilon_1}(n/l) + \alpha (\ln(n) + \ln(\log_{1+\epsilon_1}n))$ for a constant $\alpha$ which depends on $\epsilon_1$. We multiply this bound by $n$ to get the bound when summed over all nodes and the total cost of sampling is bound by twice this, which is $2n\log_{1 + \epsilon_1}(n/l) + 2\alpha n(\ln(n) + \ln(\log_{1+\epsilon_1}n))= O(n \log n)$.

Lines 6 to 17 of Algorithm \ref{alg:approxalgorithm} samples neighbors from nodes in $H$ and performs a constant amount of work per sampled neighbor. In particular we note that in the incident-list model, each execution of line \ref{line:sample} requires only constant time. Lines 18 to 25, for each node $u$ in $Q$, perform a constant amount of work for every node that sampled $u$ and each node in $H$ can be added to the $Q$ at most once. Thus, the running time of the randomized part of Algorithm \ref{alg:approxalgorithm} (Lines 3 to 31) is proportional to the number of samples.

To this we need to add the cost of the peeling algorithm at the end if no $l$ is returned.

\begin{lemma}
\label{lem:outcore}
Let $core(k)$ be the set of nodes in a $k$-core for some $k$. Let $outcore(k)=\{(u, v) \in E | u \in core(k), v \in \bigcup_{i\geq k}core(i)\}$. Then $|outcore(k)| \leq k \cdot |core(k)|$.
\end{lemma}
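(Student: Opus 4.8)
The plan is to read the objects through the peeling (coreness) elimination order and then run a charging argument. Here $\bigcup_{i \geq k} core(i)$ is exactly the vertex set of the $k$-core $C_k$, the maximal induced subgraph of minimum degree at least $k$, while $core(k)$ is the set of vertices of core number exactly $k$, i.e. the vertices of $C_k$ that do not survive into $C_{k+1}$. The structural fact I would use is that $core(k)$ is precisely the set of vertices deleted when one peels $C_k$ down to $C_{k+1}$: repeatedly remove from the current subgraph of $C_k$ a vertex whose current degree is at most $k$, until only vertices of degree $\geq k+1$ remain. I would fix one such peeling order $u_1, u_2, \dots, u_{|core(k)|}$ of the vertices of $core(k)$ (breaking ties arbitrarily so the order is strict), and record that, by the defining property of peeling, at the moment $u_j$ is deleted its degree in the then-remaining subgraph is at most $k$.

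Next I would set up the charging. Every edge of $outcore(k)$ has an endpoint $u \in core(k)$ and its other endpoint $v \in C_k$, so $v$ is either in $C_{k+1}$ (core number $> k$) or itself in $core(k)$. Vertices of $C_{k+1}$ are never deleted during this peeling, hence are present throughout. I would charge each edge of $outcore(k)$ to the endpoint in $core(k)$ that is deleted \emph{earlier} in the fixed order (for an edge between two $core(k)$ vertices), and to its unique $core(k)$ endpoint (for an edge reaching into $C_{k+1}$).

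Then I would verify that the total charge received by a vertex $u_j$ equals exactly its degree in the remaining subgraph at the moment it is deleted. Indeed, the neighbours of $u_j$ present at that moment are precisely (i) its neighbours in $C_{k+1}$, which are the edges of the second type charged to $u_j$, and (ii) its $core(k)$-neighbours deleted strictly after $u_j$, which are the edges of the first type charged to $u_j$; conversely every edge charged to $u_j$ is of one of these two forms. Hence each edge of $outcore(k)$ is charged exactly once, and the charge on $u_j$ is at most $k$ by the peeling property. Summing over the $|core(k)|$ vertices yields $|outcore(k)| \leq k \cdot |core(k)|$.

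The main obstacle is the bookkeeping that forces the charge on each vertex to equal its removal-time degree rather than its full degree: edges to already-deleted $core(k)$-neighbours must not be counted at $u_j$ (they were counted earlier at the other endpoint), while edges into the permanently-present set $C_{k+1}$ must all be counted at the $core(k)$ endpoint. Choosing to charge an intra-$core(k)$ edge to its earlier-deleted endpoint is exactly what aligns the count with the remaining degree and avoids double counting, so getting this direction right is the crux of the argument.
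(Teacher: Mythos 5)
Your proof is correct and follows essentially the same route as the paper's: the paper also peels $C_k$ down to $C_{k+1}$ and uses the fact that each core-value-$k$ vertex has remaining degree at most $k$ at the moment it is removed. The paper states this in two sentences and leaves the charging implicit; your version makes the edge-to-earlier-deleted-endpoint charging explicit, which is exactly the bookkeeping needed to justify the terse claim.
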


\begin{proof}
In the peeling algorithm, after all nodes with core value $< k$ are peeled, every node remaining has degree at least $k$. When the nodes with core value $k$ are peeled, every node peeled has degree no more than $k$ when it is peeled. Therefore $|outcore(k)| \leq k \cdot |core(k)|$.
\end{proof}

The following lemma follows from Lemma \ref{lem:outcore}.

\begin{lemma}
\label{lem:corenolabel}
The number of edges incident to nodes with core value no greater than $k$ is no greater than $k \cdot |\bigcup_{k'\leq k}core(k')|$.
\end{lemma}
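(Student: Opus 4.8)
The plan is to show that the set of edges incident to a node of core value at most $k$ decomposes cleanly into the sets $outcore(j)$ for $j \leq k$, and then to sum the bound supplied by Lemma \ref{lem:outcore}. Writing $c(v)$ for the core value of a node $v$ and $S_k = \bigcup_{k' \leq k} core(k')$ for the set of nodes of core value at most $k$, I would first observe that an edge $(u,v)$ is incident to a node in $S_k$ precisely when $\min(c(u), c(v)) \leq k$, so the quantity to be bounded is $|\{(u,v) \in E \mid \min(c(u),c(v)) \leq k\}|$.

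The key step is to recognize that $outcore(j)$ is exactly the set of edges whose smaller endpoint core value equals $j$. An edge lies in $outcore(j) = \{(u,v) \in E \mid u \in core(j),\ v \in \bigcup_{i \geq j} core(i)\}$ iff we can orient it so that one endpoint has core value exactly $j$ while the other has core value at least $j$, which holds iff $\min(c(u), c(v)) = j$. Consequently the family $\{outcore(j)\}_j$ partitions $E$ according to the minimum endpoint core value, and the edges incident to $S_k$ are exactly $\bigcup_{j \leq k} outcore(j)$, a disjoint union.

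Then I would simply add up. By Lemma \ref{lem:outcore}, $|outcore(j)| \leq j \cdot |core(j)| \leq k \cdot |core(j)|$ for each $j \leq k$, so the number of edges incident to $S_k$ equals $\sum_{j \leq k} |outcore(j)| \leq k \sum_{j \leq k} |core(j)| = k \cdot |S_k|$, where the last equality uses that the sets $core(j)$ are pairwise disjoint because each node has a unique core value. This is precisely the claimed bound $k \cdot |\bigcup_{k' \leq k} core(k')|$.

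The main obstacle is the partition claim, where one must verify that $\bigcup_{j \leq k} outcore(j)$ neither misses nor double counts any edge incident to $S_k$. Missing is ruled out because every such edge has a well-defined minimum endpoint core value $j \leq k$ and therefore lands in $outcore(j)$. Double counting is ruled out because an edge with distinct endpoint core values $j_1 < j_2$ fails the membership condition for $outcore(j_2)$, since its $j_1$-endpoint has core value strictly below $j_2$; hence each edge belongs to exactly one $outcore(j)$. Once this disjointness is established, the summation and the bound $j \leq k$ give the result immediately.
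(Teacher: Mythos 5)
Your proof is correct and follows the route the paper intends: the paper gives no explicit argument, stating only that the lemma ``follows from Lemma~\ref{lem:outcore},'' and your decomposition of the relevant edge set into the disjoint classes $outcore(j)$ for $j \leq k$ (edges classified by minimum endpoint core value) followed by summing the bound $|outcore(j)| \leq j\cdot|core(j)| \leq k\cdot|core(j)|$ is exactly the omitted derivation. Nothing further is needed.
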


When $p \geq 1$, $l = O(\log n)$. By the proof of Lemma \ref{lem:delta}, the core value of every node is $O(\log n)$ with high probability. By Lemma \ref{lem:corenolabel}, the number of edges incident to every node is $O(\log n)$. Since the peeling algorithm has running time linear in the number of edges, the running time of Line 32 of Algorithm \ref{alg:approxalgorithm} is $O(n \log n)$ with high probability. Therefore, we have the following lemma.

\begin{lemma}
The total running time of Algorithm \ref{alg:approxalgorithm} is $O(n \log n)$.
\end{lemma}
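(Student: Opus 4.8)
The plan is to obtain the total running time as the sum of three contributions — the random sampling work, the per-iteration bookkeeping that touches all of $V$, and the terminal peeling step — and to verify that each is $O(n\log n)$. The trivial observation that the algorithm never inspects more than all $m$ edges already gives the $O(m)$ side of the $\min\{m,n\log n\}$ bound in Theorem~\ref{t:main}, so here I focus on the $O(n\log n)$ bound.

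First I would invoke the sampling analysis carried out just above. Summing the per-node sample counts $\sum_i k_i(v)$ over the at most $\log_{1+\epsilon_1} n$ iterations, replacing $\deg(v)$ by the number of edges \emph{attributed} to $v$ (bounded by $l\cdot(1+\epsilon_1)^2$ via Lemma~\ref{lem:preconditionsproof}), then multiplying by $n$ and by $2$ to absorb the double-counting of each edge's two endpoints, yields $2n\log_{1+\epsilon_1}(n/l)+2\alpha n(\ln n+\ln\log_{1+\epsilon_1} n)=O(n\log n)$. Since each execution of the sampling line (Line~\ref{line:sample}) costs $O(1)$ in the incident-list model, and each node enters $Q$ at most once with $O(1)$ work charged per entry of its $Sampled$ list, the cost of Lines~6--25 is proportional to the number of samples and is therefore $O(n\log n)$.

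Next I would account for the work that touches all of $V$ rather than only $H$, which the sampling count does not capture: forming the partition $(H,L)$ at Line~4 requires a degree query for every node, and re-initializing the $Sampled$ lists at Line~29 is $\Theta(n)$, so each iteration incurs $O(n)$ overhead. Because $l$ starts at $n/(1+\epsilon_1)$ and shrinks by the factor $(1+\epsilon_1)$ each iteration with $\epsilon_1$ constant, the number of iterations is at most $\log_{1+\epsilon_1} n=O(\log n)$, so this bookkeeping contributes $O(n)\cdot O(\log n)=O(n\log n)$. Finally I would add the terminal peeling step at Line~32, which runs only when $p\ge 1$: by the argument following Lemma~\ref{lem:delta} we then have $l=O(\log n)$, hence every core value is $O(\log n)$ with high probability, so by Lemma~\ref{lem:corenolabel} the number of edges incident to nodes of core value at most $l$ is $O(n\log n)$, and the linear-time peeling algorithm finishes within this bound with high probability. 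Summing the three contributions gives $O(n\log n)$.

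The main obstacle is not any individual bound — each follows routinely from the preceding lemmas — but rather ensuring that the per-iteration $\Theta(n)$ overhead of partitioning and resetting does not dominate, since the sampling analysis counts only samples over nodes in $H$ and is blind to this cost; this is precisely why bounding the iteration count by $O(\log n)$ is essential. I would also note that, because the peeling bound relies on the core values being $O(\log n)$ only with high probability, the resulting $O(n\log n)$ is itself a high-probability statement, consistent with Theorem~\ref{t:main}.
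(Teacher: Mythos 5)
Your proposal is correct and follows essentially the same route as the paper: it combines the preceding sampling/edge-attribution analysis (giving $O(n\log n)$ samples with $O(1)$ work per sample and per $Sampled$-list entry) with the terminal peeling bound obtained from $l=O(\log n)$ and Lemma~\ref{lem:corenolabel}. Your explicit accounting of the per-iteration $\Theta(n)$ cost of forming $(H,L)$ and resetting the $Sampled$ lists, multiplied by the $O(\log n)$ iteration count, is a small but worthwhile addition that the paper leaves implicit.
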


\subsubsection*{Comparing Running Time to the Algorithm from \cite{colton2}}

\label{subsec:coltoncomparison}
The algorithm from \cite{colton2} requires around $\Theta(\frac{n \log n}{\epsilon^2})$ edges in the sampled subgraph, which is obtained in the streaming model by sampling each edge with probability $\Omega(\frac{n \log n}{\epsilon^2 m})$. Directly adapting this method to the incidence list model would require scanning over each edge. However, it can be modified to be more efficient for the incidence list model by sampling a certain number of edges incident to each node according to some probability distribution such that the resulting graph has the required number of edges. To compare the running time of our algorithm against the modified algorithm from \cite{colton2}, we analyze the running time of Algorithm \ref{alg:approxalgorithm} with respect to the nodes that are sampled from. 

\begin{lemma}
\label{lem:totalsamples}
Given a graph $G=(V,E)$ with degeneracy $\delta$, the running time of Algorithm \ref{alg:approxalgorithm}  is $O(n + \frac{|D_s|}{\delta} \log n)$ where $D_s = \sum_{v \in N_s} deg(v)$ and $N_s$ denotes the set of nodes in $V$ with degree $ > \delta / (1 + \epsilon_1)^2$.
\end{lemma}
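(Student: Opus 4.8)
The plan is to refine the $O(n\log n)$ bound of the previous subsection by observing that the algorithm only ever samples from, and peels, nodes of sufficiently high degree, namely those in $N_s$, and then to re-run the sample-counting argument restricted to $N_s$ while charging $1/l = \Theta(1/\delta)$ factors against $\delta$.

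First I would establish the crucial structural fact that $H\subseteq N_s$ throughout the entire execution. By Lemma~\ref{lem:delta}, the returned threshold $l$ satisfies $l > \delta/(1+\epsilon_1)^2$, and since the thresholds $l_i = n/(1+\epsilon_1)^i$ are decreasing, every threshold used in the algorithm is at least this final $l$. A node $v$ is placed in $H$ in iteration $i$ only if $\deg(v)\ge l_i\ge l > \delta/(1+\epsilon_1)^2$, so any node ever placed in $H$ lies in $N_s$; consequently all sampling (Line~10) and all peeling work is confined to $N_s$. Lemma~\ref{lem:delta} also gives $l=\Theta(\delta)$ (with constants depending only on $\epsilon_1\le 1/3$), hence $1/l=\Theta(1/\delta)$, which I will use to convert the $1/l$ factors appearing in the sample counts into $1/\delta$.

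Next I would bound the total number of samples. For a fixed $v\in N_s$, note that $v$ is placed into $H$ only in those iterations $i$ with $l_i\le\deg(v)$, i.e.\ in at most $\log_{1+\epsilon_1}(\deg(v)/l)+1$ iterations, rather than in all $\log_{1+\epsilon_1}(n/l)$ of them as the cruder earlier estimate assumed. Summing $k_i(v)=\lceil p_i\deg(v)\rceil\le p_i\deg(v)+1$ over exactly these iterations, the $p_i\deg(v)$ terms form a geometric series dominated by its last term and sum to $O\big((\log n)/l\big)\cdot\deg(v)$ exactly as in the previous subsection, while the $+1$ terms contribute $\log_{1+\epsilon_1}(\deg(v)/l)+1$. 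Summing over $v\in N_s$ and bounding the logarithms via $\ln(\deg(v)/l)\le\deg(v)/l$, the total number of samples is
\begin{equation*}
O\!\Big(|N_s| + \tfrac{|D_s|}{l} + \tfrac{\log n}{l}\,|D_s|\Big) = O\!\Big(n + \tfrac{|D_s|}{\delta}\log n\Big),
\end{equation*}
where $|D_s|=\sum_{v\in N_s}\deg(v)$ and we used $l=\Theta(\delta)$ and $|N_s|\le n$. I would then argue the running time equals this, plus $O(n)$: as shown in the previous subsection the sampling loop (Lines~6--17) and the queue-based peeling (Lines~18--25) do constant work per sample (each sample costs one unit when drawn and at most one deferred decrement via its $Sampled$ list), and the only remaining overhead is materializing the partition $(H,L)$ at Line~4. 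Since degrees are integers in $[0,n]$, a single bucket sort of the nodes by degree in $O(n)$ time lets us build and incrementally update $H$ as $l$ decreases; nodes outside $N_s$ are never inserted, so the bookkeeping touching them is $O(n)$ overall and the work among $N_s$-nodes is dominated by the sampling cost. In the residual case $p\ge 1$ we have $\delta=O(\log n)$ and the direct peeling call is handled as in the previous subsection.

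The main obstacle is controlling the contribution of the per-iteration ceiling ($+1$) terms and the per-iteration reconstruction of $H$: naively both scale like $|N_s|\cdot\log_{1+\epsilon_1}(n/l)$, which would reintroduce an $n\log n$ term. The key to avoiding this is the tighter accounting that a node $v$ is active for only $\log_{1+\epsilon_1}(\deg(v)/l)$ iterations, together with the elementary inequality $\ln(\deg(v)/l)\le\deg(v)/l$, which collapses the summed logarithms into $O(|D_s|/\delta)$, and the bucket sort that makes the partition overhead $O(n)$ rather than $O(n\log n)$.
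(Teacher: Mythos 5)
Your proof is correct and follows essentially the same route as the paper's: use Lemma~\ref{lem:delta} to conclude that every threshold used is at least the returned $l > \delta/(1+\epsilon_1)^2$, so only nodes in $N_s$ ever enter $H$, then bound the total samples by the geometric series $\sum_i p_i\sum_{v\in N_s}\deg(v)$, which is dominated by its last term and gives $O(\tfrac{D_s}{\delta}\log n)$. You are somewhat more careful than the paper about the lower-order terms (the $+1$ from the ceilings and the cost of maintaining the partition $(H,L)$, which the paper leaves implicit in the additive $O(n)$ of the statement), but this is a refinement of detail rather than a different argument.
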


\begin{proof}
By Lemma \ref{lem:delta}, the output $l$ returned by Algorithm \ref{alg:approxalgorithm} is $> \delta / (1 + \epsilon_1)^2$ with high probability, which means only nodes in $N_s$ with degree $> \delta / (1 + \epsilon_1)^2$ will be added to $H$ and have their neighbours sampled. The number of iterations of the while-loop at Line 3 is $< \log_{1 + \epsilon_1}(\frac{n(1 + \epsilon_1)^2}{\delta})$. Let $p$ denote the initial probability $\frac{2((1 + c) \ln(n) + \ln(\log_{1 + \epsilon_1}(n)))(1 + \epsilon_1)^2}{\epsilon_1^2 n}$. The total number of neighbours sampled is 
\begin{align*}
   &< \sum_{i=0}^{\log_{1 + \epsilon_1}(\frac{n(1 + \epsilon_1)^2}{\delta})} \sum_{v \in N_s} p (1 + \epsilon_1)^i \cdot deg(v) \\
   &= p \cdot D_s \sum_{i=0}^{\log_{1 + \epsilon_1}(\frac{n(1 + \epsilon_1)^2}{\delta})} (1 + \epsilon_1)^i \\
   &= p \cdot D_s \left(\frac{(1 + \epsilon_1)^{(\log_{1 + \epsilon_1}(\frac{n(1 + \epsilon_1)^2}{\delta})+1)}-1}{(1 + \epsilon_1)-1}\right)\\
   &= p \cdot D_s \left(\frac{\frac{n(1 + \epsilon_1)^3}{\delta} - 1}{(1 + \epsilon_1) - 1}\right) <  p \cdot D_s \frac{n(1 + \epsilon_1)^3}{\epsilon_1 \delta}\\
   &= D_s \frac{2((1 + c) \ln(n) + \ln(\log_{1 + \epsilon_1}(n)))(1 + \epsilon_1)^2}{\epsilon_1^2 n} \frac{n(1 + \epsilon_1)^3}{\epsilon_1 \delta} \\
   &= O(\tfrac{D_s}{\delta} \log n)
\end{align*}
Since Algorithm \ref{alg:approxalgorithm} runs proportionally to the number of samples, the running time is $O(\tfrac{D_s}{\delta} \log n)$.
\end{proof}

We use Lemma \ref{lem:totalsamples} to show that Algorithm \ref{alg:approxalgorithm} can have an asymptotically faster running time than the algorithm from \cite{colton2} adapted for the incidence list model, which has a running time of $O(n \log n)$ since it also runs proportionally to the number of sampled edges. Consider an input graph $G$, with $n$ nodes, which consists of the union of disjoint cliques as follows: $G$ has one larger clique of $n^b$ nodes, where $b < 1$, and several other smaller cliques of $n^b/(1 + \epsilon_1)^2$ nodes. We let $B$ denote the set of nodes which belong to the larger clique. The degeneracy of $G$ is $n^b - 1$ and Algorithm \ref{alg:approxalgorithm} only samples neighbours adjacent to the nodes in $B$ which each have degree $n^b -1$. So, we have that $D_s = n^b(n^b - 1)$. By Lemma \ref{lem:totalsamples}, the running time of Algorithm \ref{alg:approxalgorithm} is $O(n^b \log n)$, which is asymptotically less than the running time of the algorithm from \cite{colton2}.

\section{Optimizations}
\label{sec:optimizations}
We present optimizations to improve the running time of Algorithm $\ref{alg:approxalgorithm}$ in practice.

\subsubsection*{Lower Starting Threshold}
The initial degeneracy threshold of $l = \frac{n}{(1 + \epsilon_1)}$ is too high in practice since it is unlikely that the degeneracy of the graph is $n$, and this will cause numerous iterations at the beginning of the algorithm where we end up removing all the nodes from $H$. We can improve this by lowering the starting threshold. Let $d$ be the maximal integer where there are at least $d$ nodes in $G$ with degree $d$. Then, we know that the degeneracy of $G$ is upper bounded by $d$. We can compute $d$ once at the beginning of the algorithm, then divide $l$ by $(1 + \epsilon_1)$ and multiply $p$ by $(1 + \epsilon_1)$ until $l \leq d$. 

\subsubsection*{Larger Threshold Leaps}
We can further reduce the number of iterations required to output a degeneracy by changing $l$ by more in each iteration where $H$ becomes empty. That is, we divide $l$ and multiply $p$ by the sequence $(1 + \epsilon_1), (1 + \epsilon_1)^2, (1 + \epsilon_1)^4, (1 + \epsilon_1)^8, ...$ until we find a $(1 + \epsilon_1)^{2^i}$ for which $H$ is not empty at Line 26. Then, we binary search in the range $(1 + \epsilon_1)^{2^{i-1}}$ to $(1 + \epsilon_1)^{2^i}$ to get the first $k$ where the factor $(1 + \epsilon_1)^k$ results in $H$ not being empty.

\subsubsection*{Reusing Randomness} On Line 10, computing a random neighbor with replacement each time becomes very time consuming in practice. For the random sampling method to work, it does not require a new random number to be computed each time. In the correctness proof, in the proof of Lemma \ref{lem:probbounds}, since we take the union bound on the probability of error over all nodes and iterations, we can reuse the same randomness for sampling from all nodes in all iterations. At the beginning of the algorithm, we initialize an array $R$ of size $md(V)$, where $md(V)$ is the maximum degree of any vertex in $V$ of random numbers within the range $[0, 1)$. Then, each time we need to sample the $i^{th}$ neighbour of a node $v$ (Line 10), we take the $(R_i \cdot deg(v))^{th}$ element from $v$'s incidence list. 
Since $k(v)$ increases in each iteration, we remember the sampled neighbors for $v$ from the previous iteration and add to it as needed. 

\section{Experiments}

\begin{table}
\begin{tabular}{|c|r|r|}
\hline
Graph        & \multicolumn{1}{c|}{Nodes} & \multicolumn{1}{c|}{Edges} \\ \hline
clueweb12    & 978,408,098                & 74,744,358,622             \\ \hline
uk-2014      & 787,801,471                & 85,258,779,845             \\ \hline
gsh-2015     & 988,490,691                & 51,984,719,160             \\ \hline
sk-2005      & 50,636,154                 & 3,639,246,313              \\ \hline
twitter-2010 & 41,652,230                 & 2,405,026,390              \\ \hline
\end{tabular}
\caption{Summary of datasets}
\label{tab:datasets}
\end{table}

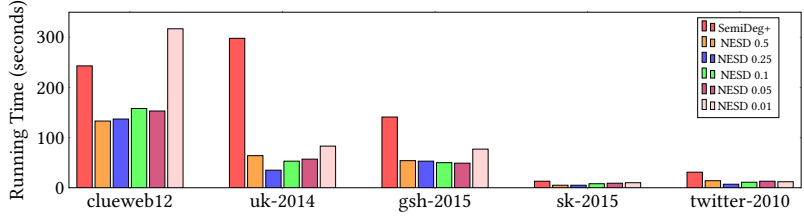
\begin{figure*}
    \centering
    \subfigure{
    \resizebox{0.6\textwidth}{!}{
    \begin{tikzpicture}
      \centering
      \begin{axis}
      [
        ybar,
        ymin=0,
        ymax=350,
        ticks=both,
        width = 20cm, 
        height = 6cm,
        bar width=11pt,
        label style={font=\huge},
        tick label style={font=\huge},
        ylabel={Running Time (seconds)},
        legend pos=north east,
        x axis line style = { opacity = 1 },
        xtick distance = 1, 
        tickwidth  = 1pt,
        symbolic x coords = {clueweb12, uk-2014, gsh-2015, sk-2015, twitter-2010},
      ]
    
    \addplot[fill = red!65] coordinates {(clueweb12, 243) (uk-2014,298) (gsh-2015, 141) (sk-2015, 13) (twitter-2010, 31)};
    \addplot[fill = orange!70] coordinates {(clueweb12, 133) (uk-2014,64) (gsh-2015, 54) (sk-2015, 5) (twitter-2010, 14)};
    \addplot[fill = blue!65] coordinates {(clueweb12, 137) (uk-2014,35) (gsh-2015, 53) (sk-2015, 5) (twitter-2010, 7)};
    \addplot[fill = green!60] coordinates {(clueweb12, 158) (uk-2014, 53) (gsh-2015, 50) (sk-2015, 8) (twitter-2010, 11)};
    \addplot[fill = purple!65] coordinates {(clueweb12, 153) (uk-2014, 57) (gsh-2015, 49) (sk-2015, 9) (twitter-2010, 13)};
    \addplot[fill = pink!65] coordinates {(clueweb12, 317) (uk-2014, 83) (gsh-2015, 77) (sk-2015, 10) (twitter-2010, 12)};
    
    \legend{SemiDeg+, NESD 0.5, NESD 0.25, NESD 0.1, NESD 0.05, NESD 0.01}
    \end{axis}
    \end{tikzpicture}
    }

}
    \caption{\label{fig:runtime} Running time comparison between SemiDeg+ and NESD with various $\epsilon$ and $c=0.5$}
    \label{fig:runtimes}
\end{figure*}

\begin{table*}
\begin{minipage}{.102\linewidth}
\begin{tabular}{l}
                                \\ \hline
\multicolumn{1}{|l|}{SemiDeg+}  \\ \hline
\multicolumn{1}{|l|}{NESD 0.5}  \\ \hline
\multicolumn{1}{|l|}{NESD 0.25} \\ \hline
\multicolumn{1}{|l|}{NESD 0.1}  \\ \hline
\multicolumn{1}{|l|}{NESD 0.05} \\ \hline
\multicolumn{1}{|l|}{NESD 0.01} \\ \hline
\end{tabular}
\caption*{ }
\end{minipage}%
\begin{minipage}{.145\linewidth}
\begin{tabular}{|l|l|}
\hline
Output   & Error \\ \hline
4244     &       \\ \hline
3696.835 & 0.871 \\ \hline
4003.074 & 0.943 \\ \hline
4186.726 & 0.987 \\ \hline
4186.726 & 0.984 \\ \hline
4238.685 & 0.999 \\ \hline
\end{tabular}
\caption*{\textmd{clueweb12}}
\end{minipage}%
\begin{minipage}{.145\linewidth}
\begin{tabular}{|l|l|}
\hline
Output   & Error \\ \hline
10003    &       \\ \hline
8756.98  & 0.875 \\ \hline
9884.616 & 0.988 \\ \hline
9947.312 & 0.994 \\ \hline
9850.165 & 0.985 \\ \hline
9998.635 & 0.999 \\ \hline
\end{tabular}
\caption*{\textmd{uk-2014}}
\end{minipage}%
\begin{minipage}{.145\linewidth}
\begin{tabular}{|l|l|}
\hline
Output   & Error \\ \hline
9888     &       \\ \hline
9418.101 & 0.952 \\ \hline
9004.657 & 0.911 \\ \hline
9601.476 & 0.971 \\ \hline
9645.406 & 0.975 \\ \hline
9645.45  & 0.975 \\ \hline
\end{tabular}
\caption*{\textmd{gsh-2015}}
\end{minipage}%
\begin{minipage}{.145\linewidth}
\begin{tabular}{|l|l|}
\hline
Output   & Error \\ \hline
4510     &       \\ \hline
4175.481 & 0.926 \\ \hline
4338.047 & 0.962 \\ \hline
4425.227 & 0.981 \\ \hline
4452.054 & 0.987 \\ \hline
4502.467 & 0.998 \\ \hline
\end{tabular}
\caption*{\textmd{sk-2005}}
\end{minipage}%
\begin{minipage}{.145\linewidth}
\begin{tabular}{|l|l|}
\hline
Output   & Error \\ \hline
2488     &       \\ \hline
2162.936 & 0.869 \\ \hline
2391.45  & 0.961 \\ \hline
2376.771 & 0.955 \\ \hline
2545.711 & 1.023 \\ \hline
2534.374 & 1.019 \\ \hline
\end{tabular}
\caption*{\textmd{twitter-2010}}
\end{minipage}%
\caption{\label{tab:errorfactors} Error factors of NESD degeneracy outputs with various $\epsilon$}
\end{table*}

We refer to our approach as NESD (Neighbor Sampling for Degeneracy). In our experiments, we compare NESD (including all the optimizations from Section \ref{sec:optimizations}) to the approximation algorithm from \cite{colton2} and the state-of-the-art exact degeneracy algorithm, SemiDeg+, from \cite{semideg}. Since the goal of our approach is to achieve an efficient sublinear $(1 + \epsilon)$ approximation algorithm, we do not compare versus other less accurate approximate algorithms nor other exact algorithms since they are superseded by SemiDeg+. We experiment with 5 different values of $\epsilon$, namely 0.5, 0.25, 0.1, 0.05, 0.01 to illustrate the varying approximation accuracies of our algorithm. The value of $c$ did not have a major impact on the results since the probability of correctness is high with $n$ being large in all of the datasets. So, we use $c=0.5$ for all experiments.

We use the datasets twitter-2010, sk-2005, gsh-2015, uk-2014, and clueweb12 from the Laboratory of Web Algorithmics \cite{LWA1, LWA2, LWAbubing}, and the characteristic of each graph is illustrated in Table \ref{tab:datasets}. We note that the number of edges shown in Table \ref{tab:datasets} is after symmetrization, where we add the reverse of directed edges if they do not already exist, and without any self loops. Each of the graphs are massive webgraphs with over 1 billion edges, and the largest webgraph, cluweb12, has over 74 billion edges. The algorithms used in the experiments were implemented in Java 11, and the experiments were run on Amazon EC2 instances with the r5.24xlarge instance size (96 vCPU, 768GB memory) running Linux.

Figure \ref{fig:runtimes} shows the running time results of NESD with each $\epsilon$. On each of the datasets, the approximation algorithm from \cite{colton2} was orders of magnitude slower than our algorithm and SemiDeg+ due to the requirement of passing through every single edge of the graph. As such, when run on massive webgraphs, the running time was extremely slow and we do not include the results in the figure.  NESD ran 2.17x to 4.56x faster than SemiDeg+ on twitter-2010, 1.33x to 2.61x faster on sk-2005, 1.81x to 2.86x faster on gsh-2015, and 3.57x to 8.45x faster on uk-2014. On clueweb12, NESD ran 1.53x to 1.83x faster with $\epsilon$ between $0.5$ to $0.05$, but 1.3x slower on $\epsilon=0.01$, which is the only input in all the experiments where NESD was slower.

Table \ref{tab:errorfactors} shows the degeneracy $\delta$ of each dataset (output by SemiDeg+) as well as the approximate degeneracy and error factors from the output by NESD with each $\epsilon$. The error factors on each dataset were between $0.869$ and $0.952$ for $\epsilon = 0.5$, between $0.911$ and $0.988$ for $\epsilon=0.25$, between $0.955$ and $0.994$ for $\epsilon=0.1$, between $0.975$ and $1.023$ for $\epsilon = 0.05$, and between $0.975$ and $1.019$ for $\epsilon=0.01$. This shows that the error factors are well within the proven theoretical bounds and it also demonstrates the high level of degeneracy approximation accuracy of our algorithm.

\section{Extension to \texorpdfstring{$k$}{k}-Core Decomposition}
We can extend Algorithm \ref{alg:approxalgorithm} to solve full $k$-core decomposition. In this section, we describe how we can can modify the algorithm to label each node in the graph with an approximate core number within a factor of $(1 + \epsilon)$ of its true core number with high probability in $O(n \log n)$ time.

\subsection{Algorithm Modifications}
Instead of returning $l$ as the approximate degeneracy when $H$ is non-empty on Line 31, we continue the algorithm and assign an approximate core number label $l$ nodes which remain in $H$ (Line 30). Within the body of the else statement on Line 30, we label all the nodes $v \in H$ with the value of $l$ of that iteration. We then add the line of code $V \leftarrow L$ such that in the next iteration, the vertices in $V$ are only the unlabelled nodes. We then also perform the same updates to $l$, $p$ and $Sampled(v)$ for all $v\in V$. 

On Line 32, when $p \geq 1$, we run the peeling algorithm on the remaining unlabelled nodes $V$ and assign core numbers to nodes which were not labeled inside the while-loop. Let $l'$ be the last approximate label assigned within the while-loop. If the peeling algorithm assigns a core number of $>l'/(1 + (3/2) \epsilon_1)$ to a node, we instead label that node with $l'$. When all nodes have an assigned label, we return those labels as $(1+\epsilon)$-approximate core numbers for each node.

\subsection{Correctness}
The high probability of correctness for the approximate core values follow from the Lemmas \ref{lem:uniformsample} to \ref{lem:correctness} in Section \ref{subsec:degeneracycorrectness}. However, we need to prove that using the peeling algorithm as described for labelling the unlabeled nodes satisfies the desired approximation bounds. This is a concern because nodes with lower core values may have already been labelled and removed from the graph of unlabelled nodes, and therefore will not participate in the peeling algorithm. Consequently it is possible that nodes in the peeling algorithm will be assigned a higher core value than their actual core value.

\begin{lemma}
Running the peeling algorithm on the unlabelled nodes (as described above) assigns $(1 + \epsilon)$-approximate core numbers to those nodes.
\end{lemma}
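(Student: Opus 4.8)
The plan is to derive the statement from the laminar nesting of the $k$-cores, using the core-number separation between labelled and unlabelled vertices that the while-loop already guarantees. Write $l'$ for the last label produced inside the while-loop and set the boundary $\tau = l'/(1+(3/2)\epsilon_1)$. Applying Lemma \ref{lem:delta} (through Lemmas \ref{lem:hconditions} and \ref{lem:preconditionsproof}) to every iteration, I would first record that, with high probability, every vertex labelled inside the while-loop has true core number $c(v) \geq \tau$, while every vertex still unlabelled when $p \geq 1$ has $c(v) < l'(1+\epsilon_1)$. Hence every vertex with core number strictly below $\tau$ is certainly unlabelled, and the only vertices whose status is ambiguous are those in the overlap band $[\tau,\, l'(1+\epsilon_1))$.

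The structural heart of the argument is that the peeling step recovers the exact core number of every unlabelled vertex with $c(v) < \tau$. Here I would stress the point behind the remark: the already-labelled vertices are not peeled, but they must be retained in the graph as fixed neighbours. If they were deleted outright, a low-core vertex all of whose neighbours sit in a high core (say a vertex joined to a large clique) could be driven to apparent core number $0$, an unbounded error; keeping them present prevents exactly this. With the labelled vertices kept as permanent anchors, I would use the nesting of cores ($core(k+1) \subseteq core(k)$): since every labelled vertex has core number at least $\tau$, ordinary peeling of $G$ removes no labelled vertex at any level below $\tau$, so below level $\tau$ the restricted peeling coincides with ordinary peeling. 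Consequently each unlabelled vertex with $c(v) < \tau$ is removed at level exactly $c(v)$ and is assigned its true core number.

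For the remaining unlabelled vertices, those whose peeling value reaches $\tau$, the value can strictly exceed the true core number, because same-band neighbours that happen to have been labelled are frozen rather than peeled and so keep the vertex's degree artificially high; this is the only source of over-estimation, and it is precisely what the clamp removes. Any vertex assigned a peeling value $> \tau$ is relabelled $l'$. Such a vertex has $c(v) \in [\tau,\, l'(1+\epsilon_1))$, so $c(v)/(1+\epsilon_1) < l' \leq c(v)(1+(3/2)\epsilon_1)$, which for $\epsilon_1 = \epsilon/3$ lies within a factor $(1+\epsilon)$, the same bound already established for the labelled vertices in Lemma \ref{lem:delta}. Combining the exact values below $\tau$ with the clamped values in the band yields a $(1+\epsilon)$-approximation for every unlabelled vertex.

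I expect the middle step to be the main obstacle: making rigorous that retaining the labelled vertices as non-peelable anchors makes the restricted peeling agree with true peeling at all levels below $\tau$. This rests on the laminarity of the core decomposition together with the high-probability guarantee (Lemma \ref{lem:delta}) that no vertex of true core number below $\tau$ is ever labelled, so that the set frozen out of the peeling is contained in the cores above level $\tau$. The overlap band is the genuinely lossy region, and it is also the only place where the clamp is indispensable.
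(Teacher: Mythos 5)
Your proof follows the same two-case decomposition as the paper's: unlabelled vertices whose peeling value falls below $\tau = l'/(1+(3/2)\epsilon_1)$ receive their exact core number, and vertices whose peeling value exceeds $\tau$ are clamped to $l'$, which is justified via $c(v)\in[\tau,\,l'(1+\epsilon_1))$ and the bounds $c(v)/(1+\epsilon_1) < l' \leq c(v)(1+(3/2)\epsilon_1)$. The substantive difference is in the middle step. The paper simply asserts that, because every vertex of core number below $\tau$ is unlabelled (Lemma~\ref{lem:delta}), peeling the unlabelled set reproduces the full-graph peeling for all values below $\tau$. You correctly point out that this assertion holds only if the labelled vertices are retained as non-peelable anchors whose edges still count toward degrees: if they are deleted outright together with their incident edges, an induced-subgraph peeling can only \emph{under}estimate, and a low-core vertex all of whose neighbours lie in a high core can be driven to peeling value $0$ --- an unbounded error the paper's proof does not rule out. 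Your fixed-point argument (for every level $k+1\leq\tau$ the survivors of the anchored peeling are exactly the $(k+1)$-core of $G$, because all frozen vertices lie inside it) supplies the justification the paper omits, and it simultaneously explains why clamped vertices must have $c(v)\geq\tau$, which the paper uses implicitly. So your route is the paper's route made rigorous: it disambiguates how the final peeling must treat labelled vertices and closes a genuine gap in the published argument; the remainder (the clamp and the $(1+\epsilon)$ arithmetic with $\epsilon_1=\epsilon/3$) coincides with the paper's.
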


\begin{proof}
Let $V'$ be the set of unlabelled nodes and let $l'$ be the last label assigned within the while-loop (described above). Consider the nodes in $V'$ which are assigned a label of $< l'/(1 + (3/2)\epsilon_1)$ by running the peeling algorithm on $V'$. By Lemma \ref{lem:delta}, all nodes with core value of $< l'/(1 + (3/2) \epsilon_1)$ would also be unlabelled. Hence, the nodes which are assigned a label of $< l'/(1 + (3/2) \epsilon_1)$ by the peeling algorithm receive the same label as they would in the peeling algorithm run on the entire graph. 

Next, consider the nodes $v \in V'$ which would have received a label of $> l'/(1 + (3/2) \epsilon_1)$ by running the peeling algorithm on $V'$, but we label them by $l'$. By the proof of Lemma \ref{lem:delta}, since $v$ was moved to $L$ in the last iteration, $c(v)< l' \cdot (1 + \epsilon_1)$. Hence, the approximation bounds hold if we label them with $l'$.
\end{proof}
\subsection{Running Time}
We slightly modify the analysis of the running time in Section \ref{subsec:degeneracyrunningtime} to consider the time it takes to label all nodes with an approximate core number. Let $l(v)$ denote the label assigned to any node $v$. The neighbors of $v$ are sampled in each iteration until does not get moved to $L$ and gets assigned a label. The maximum number of iterations of the while-loop is $j = \log_{1 + \epsilon_1} n$ which means the total cost of sampling node $v$ is $\log_{1 + \epsilon_1}(n) + \tfrac{2((1+c)\ln(n) + \ln(\log_{1+\epsilon_1}n))(1 + \epsilon_1)^3}{\epsilon_1^3} \deg(v)$. 


Similar to Section \ref{subsec:degeneracyrunningtime}, consider the iteration where node $v$ is labelled $l(v)$ ($v$ was moved to $L$ in the previous iteration). In the previous iteration, for each edge $(u, v)$, if $u$ was moved to $L$ before $v$, we attribute the cost of sampling that edge in iteration $j$ to $u$. Otherwise, we attribute that edge to $v$. This time, we consider the edges attributed to the nodes which are assigned a label in each iteration, and by the same analysis as in Section \ref{subsec:degeneracyrunningtime}, the edges attributed for nodes with labels $l(v)$ in any iteration is $< l(v) ( 1 + \epsilon)^2$. So, the number of samples of attributed edges for any node with label $l(v)$ in any iteration is less than $\log_{1 + \epsilon_1}(n) + \alpha (\ln(n) + \ln(\log_{1+\epsilon_1}n))$ for some constant $\alpha$. Since this quantity is the same for all labels, and each node is assigned one label, this bound when summed over all nodes is  $n\log_{1 + \epsilon_1}(n) + \alpha n (\ln(n) + \ln(\log_{1+\epsilon_1}n))$. Then, the total cost of sampling is bound by twice this which is $2n\log_{1 + \epsilon_1}(n) + 2\alpha n (\ln(n) + \ln(\log_{1+\epsilon_1}n)) = O(n \log n)$.

Finally, the analysis of the running time of the peeling algorithm is the same as the analysis from Section \ref{subsec:degeneracyrunningtime}. Therefore, Theorem \ref{t:main} follows.

\section{Conclusion}
We have devised $O(n\log n)$ algorithms for degeneracy and for $k$-core in the incident-list model. Both algorithms give provably close approximate solutions, with high probability. We have shown in experiments, that on all our instances of very large graphs, our algorithm for degeneracy, NESD works within our approximation guarantees, and for almost all our data, it gives a significant speed up over prior state-of-the-art methods for computing degeneracy. Our algorithm features various optimizations which may be of independent interest. 

 \newpage
 \balance
 \bibliography{bibliography}
 \bibliographystyle{acm}
\end{document}